\documentclass[journal]{IEEEtran}

\usepackage[cmex10]{amsmath}
\usepackage{amssymb,amsfonts,amsthm}
\usepackage{booktabs}
\usepackage{graphicx}
\usepackage{cite}
\usepackage[colorlinks=false,pdfborder={0 0 0}]{hyperref}
\newcommand{\customfootnotetext}[2]{{
  \renewcommand{\thefootnote}{#1}
  \footnotetext[0]{#2}}}
\newtheorem{theorem}{Theorem}
\newtheorem{lemma}[theorem]{Lemma}
\newtheorem{proposition}[theorem]{Proposition}
\newtheorem{corollary}[theorem]{Corollary}

\newtheorem{remark}[theorem]{Remark}

\newcommand{\ket}[1]{\lvert #1\rangle}
\newcommand{\bra}[1]{\langle #1\rvert}

\begin{document}

\title{Haar-Bayesian Pure-State Prediction under Relative-Entropy Loss: Arbitrary-Effect Reduction and Global Optimality}
\author{Masahito Hayashi\textsuperscript{$*$}, 
Ayanava Dasgupta\textsuperscript{$\dagger$} and Naqueeb Ahmad Warsi\textsuperscript{$\dagger$}
}
\date{}

\maketitle
\customfootnotetext{$*$}{
School of Data Science, The Chinese University of Hong Kong, Shenzhen, Longgang District, Shenzhen, 518172, China,
International Quantum Academy (IQA), Futian District, Shenzhen 518048, China,
and
Graduate School of Mathematics, Nagoya University, Chikusa-ku, Nagoya 464-8602, Japan.
(e-mail: hmasahito@cuhk.edu.cn).
}

\customfootnotetext{$\dagger$}{
Indian Statistical Institute,
Kolkata 700108, India.
(e-mail: [ayanavadasgupta\_r, naqueebwarsi]@isical.ac.in
)}

\begin{abstract}
We study Haar-Bayesian prediction of one unmeasured copy of an unknown
finite-dimensional pure quantum state after an arbitrary collective measurement
on \(n\) observed copies.  Performance is evaluated by quantum relative entropy.
For a fixed measurement, the Bayes predictive state is the posterior mean and
the optimized conditional loss is its entropy.  We then optimize the measurement
over all POVMs on the symmetric subspace.  For every nonzero positive effect
\(E\), the corresponding posterior predictive state is
\(\mu_E=(I+n\rho_E)/(n+d)\), where \(\rho_E\) is the normalized one-particle
marginal of \(E\).  Since a pure spectrum majorizes every density-operator
spectrum, this identity gives an outcome-wise entropy lower bound.  Coherent
rank-one effects attain the bound, and their Haar orbit yields the
highest-weight covariant POVM.  Hence this POVM is globally Bayes optimal over
all collective measurements and, by covariance, globally minimax.  Its exact risk is \(h_d((n+1)/(n+d))\), where
\(h_d(r)=-r\log r-(1-r)\log((1-r)/(d-1))\).
The same arbitrary-effect reduction shows that the highest-weight POVM also
maximizes the joint overlap between the latent pure state and its posterior
predictive state, equivalently the mean posterior purity, with optimum
\(((n+1)^2+d-1)/(n+d)^2\).
\end{abstract}

\section{Introduction}
In quantum inference, measurements on some identically prepared systems can be
used not only to estimate their common state, but also to assign a state to
another system that remains unmeasured
\cite{SchackBrunCaves2001QuantumBayes,TanakaKomaki2005BayesianPredictive}.
Among several systems prepared in the same unknown state, some copies may be
measured for learning, while another copy is retained for subsequent use.  The
relevant task is then not merely to report an estimate of the common state, but
to determine what state should be assigned to the unmeasured system in light of
the observed data.  This leads naturally to a quantum prediction problem.  The
prediction is a density operator conditioned on the measurement outcomes,
rather than necessarily a single estimated pure state, and can therefore
incorporate both the information obtained from the measured copies and the
uncertainty that remains after the measurement.

In this paper, we study this problem for finite-dimensional pure states under a
uniform prior.  An unknown pure state is drawn from the Haar distribution, the
first \(n\) copies are measured, and one additional copy prepared in the same
state is left unmeasured.  On the basis of the measurement outcomes, a density
operator is assigned to the remaining copy, and its quality is evaluated by
quantum relative-entropy loss.  The main questions are how the predictive state
should be chosen for a given measurement and which collective measurement gives
the best prediction.

The corresponding classical theory is predictive density estimation under
Kullback--Leibler loss: the announced predictive density is assessed by its
divergence from the true future density, and the posterior predictive density
is the Bayes act.  This viewpoint goes back to Aitchison's prediction-fit
formulation and to the predictive analyses of Aitchison--Dunsmore and Geisser,
and it has been developed further in Bayesian predictive-density estimation
under KL risk
\cite{Aitchison1975PredictionFit,AitchisonDunsmore1975Prediction,Geisser1993Predictive,Komaki2006ShrinkagePrediction,GeorgeLiangXu2006PredictiveKL}.
Generalized Bayes rules for prediction under other divergences provide a related
classical extension \cite{CorcueraGiummole1999Prediction}.

In the quantum setting, Tanaka and Komaki formulated Bayesian predictive density
operators and proved their optimality under averaged quantum relative entropy
\cite{TanakaKomaki2005BayesianPredictive}.  Relatedly, the quantum Bayes rule for
exchangeable states explains how posterior states arise after measurements
\cite{SchackBrunCaves2001QuantumBayes}.  Tanaka's generalized predictive density
operators extend this direction to quantum divergence families
\cite{Tanaka2006GeneralizedPredictive}.  Accordingly, the fixed-measurement
principle used below---that the posterior predictive state is the Bayes act---is
a known general foundation.  We include a self-contained proof specialized to
the present Haar-prior pure-state model because it is the starting point for the
subsequent optimization over measurements.  Likewise, compact-group averaging,
covariant decision rules, and the resulting Bayes--minimax comparison are used
as standard decision-theoretic tools
\cite{Holevo1979CovariantMeasurements,Holevo1982ProbabilisticStatistical}.
The global Bayes optimality theorem itself, however, is proved directly for
arbitrary POVM effects and does not rely on restricting the optimization to a
particular covariant outcome space.

Relative entropy makes the support of the predictive state important: a
rank-deficient predictive state can incur infinite loss against a pure true
state.  The arbitrary-effect posterior identity contains a universal identity
term.  This term ensures full rank and provides the spectral floor needed to
keep the relative-entropy loss finite.

The main contribution is a direct global optimization over arbitrary collective
POVMs on the symmetric subspace of the observed copies.  For any nonzero
positive effect \(E\) on \(\operatorname{Sym}^n(\mathcal H)\), we derive the
effect-conditioned posterior predictive state
\[
    \mu_E=\frac{I+n\rho_E}{n+d},
\]
where \(\rho_E\) is the normalized one-particle marginal of \(E\).  Thus the
optimized conditional loss depends on the effect only through the spectrum of
a single one-copy density operator.  Since the spectrum of a pure state
majorizes that of every density operator, Schur concavity gives a common entropy
lower bound at every measurement outcome.  Coherent rank-one effects attain
this bound outcome by outcome, and their Haar orbit forms the highest-weight
covariant POVM.  Averaging the pointwise bound therefore proves global Bayes
optimality over all collective POVMs; covariance then yields global minimax
optimality.  Coherent or highest-weight measurements are familiar from
fidelity-based pure-state estimation
\cite{MassarPopescu1995OptimalEstimation,DerkaBuzekEkert1998OptimalEstimation,Hayashi1998PureStateEstimation,Group2}.
The present issue is different: the decision is a generally mixed predictive
state, the loss is quantum relative entropy, and optimality is obtained by a
direct arbitrary-effect entropy argument.  To the best of our knowledge, this
arbitrary-effect posterior identity and the resulting outcome-wise
majorization proof of global relative-entropy optimality have not previously
been stated in this form.  The same identity also shows that the highest-weight
measurement maximizes the joint posterior-overlap criterion
\(\mathbb E_{\Phi,Y}\operatorname{Tr}[\rho_\Phi\rho_M(Y)]\), equivalently the mean
purity of the posterior predictive state.  Thus the global optimizer is shared
by relative-entropy prediction and this quadratic posterior criterion, although
their objective functions are different.

A companion analysis fixes the observation itself by applying a one-copy Haar
POVM independently to each observed copy.  That problem is distinct from the
collective optimization treated here: it evaluates the exact posterior for the
fixed product observation and studies its finite-sample and fixed-dimensional
asymptotic performance relative to the global benchmarks established below.

The remainder of the paper is organized as follows.
Section~\ref{sec:general_decision} establishes the posterior-mean Bayes rule and
the probability notation used to distinguish input-fixed, output-fixed, and
joint averages.  Section~\ref{sec:covariant_reduction} develops the covariance,
Bayes--minimax, depolarizing, and coarse-graining tools.  Section~\ref{sec:global_optimization}
proves the arbitrary-effect posterior identity, global Bayes and minimax
optimality, and the global posterior-overlap optimum.  The final section
discusses equality conditions, scope, and extensions.
\section{Bayesian Quantum Prediction under Relative-Entropy Loss}
\label{sec:general_decision}
This section formulates the measurement-independent prediction problem used
throughout the paper.  An unknown pure state \(\rho_\phi\) is drawn from the Haar
prior, and \(n+1\) copies \(\rho_\phi^{\otimes(n+1)}\) are regarded as
exchangeable systems.  A measurement is performed only on the first \(n\) copies,
producing an outcome \(x\).  The last copy is left unmeasured and is the system
to be predicted.  After observing \(x\), a predictive rule announces a density
matrix \(\sigma(x)\) for this \((n+1)\)-st copy.  The loss for the true latent
state \(\phi\) is the quantum relative entropy \(D(\rho_\phi\Vert\sigma(x))\).
The posterior distribution of \(\phi\) given the measurement outcome \(x\) is
therefore the relevant object for prediction: the Bayes predictive state is
obtained by averaging the one-copy state \(\rho_\phi\) with respect to this
posterior.

Throughout the paper, the number of observed copies is denoted by \(n\),
and the predictive target is one additional unmeasured copy prepared in the
same unknown state.

\begin{table*}[t]
\centering
\caption{Basic notation for the quantum prediction problem.}
\label{tab:basic_prediction_notation}
\small
\renewcommand{\arraystretch}{1.08}
\begin{tabular}{p{0.15\textwidth}p{0.27\textwidth}p{0.48\textwidth}}
\toprule
Symbol & Object & Meaning or defining relation \\
\midrule
\(d\) & Hilbert-space dimension & \(\mathcal H\simeq\mathbb C^d\). \\
\(n\) & Observed-copy count & Number of copies measured before predicting one additional unmeasured copy. \\
\(\rho_\phi\) & Unknown pure state & \(\rho_\phi=\ket\phi\bra\phi\), with Haar prior \(\nu\). \\
\(M^{(n)}(dx)\) & Measurement & General POVM on the observed copies, with outcome \(x\in\mathsf X\). \\
\(\pi_M(d\phi\mid x)\) & Posterior distribution & Conditional distribution of the unknown ray after observing \(x\). \\
\(\sigma(x)\) & Announced predictive state & Density operator assigned to the additional unmeasured copy. \\
\(\mu_M(x)\) & Posterior predictive state & \(\mu_M(x)=\int\rho_\phi\,\pi_M(d\phi\mid x)\); the fixed-measurement Bayes act. \\
\(D(\rho\Vert\sigma)\) & Loss & Quantum relative entropy from the true state to the announced state. \\
\(S(\rho)\) & Entropy & The minimized posterior loss is \(S(\mu_M(x))\). \\
\(R(M,\sigma)\), \(R(M)\) & Bayes risks & Risk before and after optimization over the predictive rule, respectively. \\
\bottomrule
\end{tabular}
\end{table*}

\subsection{Observed copies and the future copy}
\label{subsec:general_model}
Let \(\mathcal H\simeq\mathbb C^d\).  The unknown one-copy state is a pure state
\[
    \rho_\phi:=\ket{\phi}\bra{\phi},
\]
where \(\phi\) denotes a ray in \(\mathbb P(\mathcal H)\), equipped with the Haar
probability measure \(\nu(d\phi)\).  For the general fixed-measurement problem,
let \(M^{(n)}(dx)\) be a POVM on the first \(n\) copies, with outcome space
\(\mathsf X\).  The \((n+1)\)-st copy is not measured by \(M^{(n)}\); it is the
future system to be predicted.

For a true state \(\rho_\phi\), the outcome distribution is
\[
    p_M(dx\mid\phi)
    :=
    \operatorname{Tr}\!\left[\rho_\phi^{\otimes n}M^{(n)}(dx)\right].
\]
The prior predictive outcome distribution is
\[
    p_M(dx)
    :=
    \int_{\mathbb P(\mathcal H)}p_M(dx\mid\phi)\,\nu(d\phi).
\]
For \(p_M\)-almost every outcome \(x\), let \(\pi_M(d\phi\mid x)\) denote the
posterior distribution induced by Bayes' rule.  A predictive rule, equivalently
a density-matrix-valued decision rule, is a measurable map
\[
    x\longmapsto \sigma(x)\in\mathcal S(\mathcal H),
\]
where \(\mathcal S(\mathcal H)\) is the set of density matrices on the one-copy
Hilbert space.  The announced state \(\sigma(x)\) is interpreted as the
prediction assigned to the unmeasured \((n+1)\)-st copy.  We use the quantum
relative entropy~\cite{umegaki1962}
\[
    D(\rho\Vert\sigma),
\]
with the standard convention that it is \(+\infty\) when the support of \(\rho\)
is not contained in the support of \(\sigma\).  The Bayes prediction risk of
\((M^{(n)},\sigma)\) is
\[
    R(M,\sigma)
    :=
    \int_{\mathbb P(\mathcal H)}\int_{\mathsf X}
    D(\rho_\phi\Vert\sigma(x))\,
    p_M(dx\mid\phi)\,\nu(d\phi).
\]
Equivalently, by disintegration,
\[
    R(M,\sigma)
    =
    \int_{\mathsf X}
    \left[
        \int D(\rho_\phi\Vert\sigma(x))\,
        \pi_M(d\phi\mid x)
    \right]
    p_M(dx).
\]

\subsection{Input-fixed, output-fixed, and joint averages}
\label{subsec:input_output_joint_averages}

Before optimizing the predictive state, we record a common probability
structure that will also be used in the global relative-entropy and posterior-overlap comparisons.  This notation distinguishes the physical input state, the
observed data, and the state eventually announced from those data.

Let \(\mathcal P=(M^{(n)},\sigma)\) denote a procedure consisting of a
measurement \(M^{(n)}\) and a density-matrix-valued decision rule \(\sigma\).
Its observation is denoted by \(Y_{\mathcal P}\); in the present general model
\(Y_{\mathcal P}=x\in\mathsf X\).  The observation is not itself the announced
state: the latter is \(\sigma(Y_{\mathcal P})\).  For a fixed input ray \(\phi\),
the conditional law of the observation is
\begin{equation}
    Q_{\mathcal P}(dy\mid\phi)
    :=
    p_M(dy\mid\phi).
    \label{eq:procedure_conditional_output_law}
\end{equation}
The Haar prior and this conditional law define the joint distribution
\begin{equation}
    J_{\mathcal P}(d\phi,dy)
    :=
    \nu(d\phi)Q_{\mathcal P}(dy\mid\phi).
    \label{eq:procedure_joint_input_output_law}
\end{equation}
Its observation marginal is
\begin{equation}
    q_{\mathcal P}(dy)
    :=
    \int Q_{\mathcal P}(dy\mid\phi)\,\nu(d\phi),
    \label{eq:procedure_output_marginal_law}
\end{equation}
and its conditional input distribution given the observation is the posterior
\(\pi_{\mathcal P}(d\phi\mid y)\).  Thus, up to the usual null sets,
\begin{equation}
    J_{\mathcal P}(d\phi,dy)
    =
    q_{\mathcal P}(dy)\pi_{\mathcal P}(d\phi\mid y).
    \label{eq:procedure_joint_law_disintegration}
\end{equation}
For the fixed measurement notation used above,
\(q_{\mathcal P}=p_M\) and \(\pi_{\mathcal P}=\pi_M\).

Let \(h_{\mathcal P}(\phi,y)\) be any real-valued function integrable with
respect to \(J_{\mathcal P}\).  We distinguish three associated quantities.
The input-fixed value is
\begin{equation}
    H_{\mathcal P}^{\mathrm{in}}(\phi)
    :=
    \int h_{\mathcal P}(\phi,y)Q_{\mathcal P}(dy\mid\phi),
    \label{eq:procedure_input_fixed_value}
\end{equation}
the output-fixed posterior value is
\begin{equation}
    H_{\mathcal P}^{\mathrm{out}}(y)
    :=
    \int h_{\mathcal P}(\phi,y)\pi_{\mathcal P}(d\phi\mid y),
    \label{eq:procedure_output_fixed_value}
\end{equation}
and the joint average is
\begin{equation}
    H_{\mathcal P}^{\mathrm{joint}}
    :=
    \iint h_{\mathcal P}(\phi,y)J_{\mathcal P}(d\phi,dy).
    \label{eq:procedure_joint_average_value}
\end{equation}
Here ``input-fixed'' means that the true ray is fixed and only the observation
is averaged, whereas ``output-fixed'' means that the observation is fixed and
the input ray is averaged under its posterior distribution.

\begin{lemma}[Input-first and output-first averaging]
\label{lem:input_output_average_identity}
For every \(J_{\mathcal P}\)-integrable function
\(h_{\mathcal P}\),
\begin{align}
    H_{\mathcal P}^{\mathrm{joint}}
    &=
    \int H_{\mathcal P}^{\mathrm{in}}(\phi)\,\nu(d\phi)
    \notag\\
    &=
    \int H_{\mathcal P}^{\mathrm{out}}(y)\,q_{\mathcal P}(dy).
    \label{eq:input_output_average_identity}
\end{align}
Equivalently,
\begin{align}
&\int \nu(d\phi)
    \int Q_{\mathcal P}(dy\mid\phi)
    h_{\mathcal P}(\phi,y)
\notag\\
&\qquad=
\int q_{\mathcal P}(dy)
    \int \pi_{\mathcal P}(d\phi\mid y)
    h_{\mathcal P}(\phi,y).
    \label{eq:input_first_output_first_expanded}
\end{align}
Thus, after both the input and the observation have been averaged, the value is
independent of whether one first fixes the input or first fixes the observation.
\end{lemma}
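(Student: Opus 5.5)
The plan is to write the joint average \(H^{\mathrm{joint}}_{\mathcal P}\) as an iterated integral in two ways, once for each of the two factorizations of the joint law \(J_{\mathcal P}\). Both factorizations are already on record: the definition \eqref{eq:procedure_joint_input_output_law} and the disintegration \eqref{eq:procedure_joint_law_disintegration}. The measurable-space setting is standard. \(\mathbb P(\mathcal H)\) is a compact metric space, so it is Polish, and the outcome space \(\mathsf X\) is a standard Borel space. Under these conditions the posterior \(\pi_{\mathcal P}(d\phi\mid y)\) exists as a regular conditional probability and is unique up to \(q_{\mathcal P}\)-null sets. Hence \eqref{eq:procedure_joint_law_disintegration} holds as an identity of measures on the product \(\sigma\)-algebra.

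\textbf{Input-first equality.} Here \(J_{\mathcal P}\) is built from the probability measure \(\nu\) and the Markov kernel \(Q_{\mathcal P}(\cdot\mid\phi)\). The generalized Fubini theorem for kernels gives
\[
\iint h\,dJ_{\mathcal P}=\int \nu(d\phi)\int h(\phi,y)\,Q_{\mathcal P}(dy\mid\phi).
\]
I will prove it by the usual bootstrapping. For indicators of measurable rectangles \(A\times B\), the identity is the definition of \(J_{\mathcal P}\). A monotone class (\(\pi\)–\(\lambda\)) argument extends it to all product-measurable sets. Linearity then covers simple functions, and monotone convergence covers nonnegative measurable \(h\). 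For integrable \(h\), I split \(h=h^+-h^-\). Both parts have finite joint integral, so \(H^{\mathrm{in}}_{\mathcal P}(\phi)\) is finite for \(\nu\)-a.e.\ \(\phi\) and the difference is well defined. The same step also shows that \(\phi\mapsto H^{\mathrm{in}}_{\mathcal P}(\phi)\) is measurable.

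\textbf{Output-first equality.} The argument is the same, now applied to the factorization \(J_{\mathcal P}=q_{\mathcal P}\otimes\pi_{\mathcal P}\) from \eqref{eq:procedure_joint_law_disintegration}, with the roles of the coordinates swapped. This yields \(\iint h\,dJ_{\mathcal P}=\int q_{\mathcal P}(dy)\,H^{\mathrm{out}}_{\mathcal P}(y)\). Equating the two iterated integrals gives \eqref{eq:input_output_average_identity}, and writing out the definitions gives \eqref{eq:input_first_output_first_expanded}.

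\textbf{Main obstacle.} The only point that needs care is the existence and measurability of the posterior kernel, that is, justifying \eqref{eq:procedure_joint_law_disintegration} for a general POVM outcome space. I would invoke the disintegration theorem for standard Borel spaces. An alternative is to define \(\pi_{\mathcal P}\) through the Radon–Nikodym derivative of \(A\mapsto J_{\mathcal P}(A\times\cdot)\) with respect to \(q_{\mathcal P}\), and then use the compactness of \(\mathbb P(\mathcal H)\) to select a regular version. Changing \(\pi_{\mathcal P}\) on a \(q_{\mathcal P}\)-null set leaves the outer integral unchanged, so the result does not depend on which version is chosen.
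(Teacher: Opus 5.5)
Your proposal is correct and follows the paper's proof. Both sides are iterated integrals of \(h_{\mathcal P}\) against the same joint law \(J_{\mathcal P}\), once through \(\nu(d\phi)Q_{\mathcal P}(dy\mid\phi)\) and once through the disintegration \(q_{\mathcal P}(dy)\pi_{\mathcal P}(d\phi\mid y)\), and Fubini for kernels equates them; you spell out the monotone-class bootstrap that the paper compresses into ``Fubini's theorem.'' One small correction: you need not assume, and the paper never assumes, that \(\mathsf X\) is standard Borel, since a regular posterior kernel exists because the conditioned variable lives in the compact metric space \(\mathbb P(\mathcal H)\), and here it is even explicit, \(\pi_{\mathcal P}(d\phi\mid y)\propto\operatorname{Tr}[\rho_\phi^{\otimes n}E_y]\,\nu(d\phi)\), with \(E_y\) the operator density constructed as in \eqref{eq:global_povm_radon_nikodym_density}.
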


\begin{proof}
Both iterated integrals are integrals of the same function with respect to the
same joint law \(J_{\mathcal P}\).  The first equality uses
\eqref{eq:procedure_joint_input_output_law}; the second uses the disintegration
\eqref{eq:procedure_joint_law_disintegration}.  Fubini's theorem gives the
claim.
\end{proof}

\begin{corollary}[Constant conditional values]
\label{cor:constant_conditional_values_joint_average}
If \(H_{\mathcal P}^{\mathrm{in}}(\phi)\) is independent of \(\phi\), then its
common value equals \(H_{\mathcal P}^{\mathrm{joint}}\).  If
\(H_{\mathcal P}^{\mathrm{out}}(y)\) is independent of \(y\), then its common
value also equals \(H_{\mathcal P}^{\mathrm{joint}}\).  Hence, whenever both
conditional values are constant,
\begin{equation}
    H_{\mathcal P}^{\mathrm{in}}
    =
    H_{\mathcal P}^{\mathrm{joint}}
    =
    H_{\mathcal P}^{\mathrm{out}}.
    \label{eq:constant_input_output_joint_values}
\end{equation}
\end{corollary}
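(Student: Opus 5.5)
The plan is to read both claims of Corollary~\ref{cor:constant_conditional_values_joint_average} directly off Lemma~\ref{lem:input_output_average_identity}, using only that \(\nu\) and \(q_{\mathcal P}\) are probability measures. Integrability of \(h_{\mathcal P}\) with respect to \(J_{\mathcal P}\) is assumed throughout, as in the lemma. This guarantees that \(H_{\mathcal P}^{\mathrm{in}}\) and \(H_{\mathcal P}^{\mathrm{out}}\) are finite almost everywhere and that the iterated integrals in \eqref{eq:input_output_average_identity} make sense.

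For the first claim, suppose \(H_{\mathcal P}^{\mathrm{in}}(\phi)=c\) for all \(\phi\), or for \(\nu\)-almost every \(\phi\), which suffices. The first equality in \eqref{eq:input_output_average_identity} gives
\[
H_{\mathcal P}^{\mathrm{joint}}=\int c\,\nu(d\phi)=c\,\nu(\mathbb P(\mathcal H))=c,
\]
since the Haar measure \(\nu\) is normalized.

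For the second claim, suppose \(H_{\mathcal P}^{\mathrm{out}}(y)=c'\) for \(q_{\mathcal P}\)-almost every \(y\). This is the natural reading, because the posterior is only defined up to \(q_{\mathcal P}\)-null sets. The second equality in \eqref{eq:input_output_average_identity} gives
\[
H_{\mathcal P}^{\mathrm{joint}}=c'\,q_{\mathcal P}(\mathsf X).
\]
Here \(q_{\mathcal P}(\mathsf X)=\int\operatorname{Tr}[\rho_\phi^{\otimes n}M^{(n)}(\mathsf X)]\,\nu(d\phi)=1\), because \(M^{(n)}(\mathsf X)=I\) on the observed copies and \(\operatorname{Tr}\rho_\phi^{\otimes n}=1\).

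When both conditional values are constant, both constants equal the same number \(H_{\mathcal P}^{\mathrm{joint}}\), which gives \eqref{eq:constant_input_output_joint_values}.

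The only point needing care is the almost-everywhere qualification on \(H_{\mathcal P}^{\mathrm{out}}\), since the posterior is not pointwise defined. There is no real obstacle beyond checking the normalization \(q_{\mathcal P}(\mathsf X)=1\).
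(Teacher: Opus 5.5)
Your proposal is correct and follows the paper's own argument: both claims are read off Lemma~\ref{lem:input_output_average_identity}, using only that the Haar prior \(\nu\) and the observation marginal \(q_{\mathcal P}\) are probability measures. Your almost-everywhere qualification on \(H_{\mathcal P}^{\mathrm{out}}\) and the explicit check that \(q_{\mathcal P}(\mathsf X)=1\) are details the paper leaves implicit, and neither changes the argument.
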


\begin{proof}
This follows immediately from
\eqref{eq:input_output_average_identity}, because the Haar prior and the
observation marginal are probability measures.
\end{proof}

The constancy required in the corollary is not automatic.  In later
applications it follows from unitary covariance and transitivity on the
relevant input or output orbit.  For a general density-matrix-valued output
space, different output spectra may belong to different unitary orbits, so an
output-fixed quantity need not be constant without an additional argument.

Two special choices of \(h_{\mathcal P}\) will be used repeatedly.  The
relative-entropy loss is
\begin{equation}
    h_{\mathcal P}^{\mathrm{RE}}(\phi,y)
    :=
    D\!\left(\rho_\phi\middle\|\sigma(y)\right),
    \label{eq:procedure_relative_entropy_performance_function}
\end{equation}
and the fidelity between the pure input and the announced state is
\begin{equation}
    h_{\mathcal P}^{\mathrm{fid}}(\phi,y)
    :=
    \operatorname{Tr}\!\left[\rho_\phi\sigma(y)\right].
    \label{eq:procedure_fidelity_performance_function}
\end{equation}
For a pure-state-valued decision rule
\(\sigma(y)=\ket{\widehat\psi(y)}\bra{\widehat\psi(y)}\), the latter becomes
\begin{equation}
    h_{\mathcal P}^{\mathrm{fid}}(\phi,y)
    =
    \left|\langle\widehat\psi(y)\mid\phi\rangle\right|^2.
    \label{eq:procedure_fidelity_pure_output_form}
\end{equation}
Accordingly, the same input-fixed, output-fixed, and joint-average conventions
apply to the global relative-entropy and posterior-overlap criteria.
Procedure-specific abbreviations will be introduced only after the relevant
equality of conditional and joint values has been established.

\subsection{Posterior predictive state and posterior entropy}
\label{subsec:posterior_predictive_state}
\label{subsec:bayes_act_fixed_measurement}
\label{subsec:posterior_entropy_risk}
For a fixed measurement \(M^{(n)}\) on the observed copies, define the posterior
predictive state
\[
    \mu_M(x)
    :=
    \int \rho_\phi\,\pi_M(d\phi\mid x).
\]
This is a one-copy density matrix.  It is the posterior mean of the latent
one-copy state and is announced as the predictive state for the unmeasured
\((n+1)\)-st copy.  The Bayesian predictive-state principle in the theorem
below is the specialization to the present Haar-prior pure-state model of the
general predictive-density-operator result under averaged quantum relative
entropy \cite{TanakaKomaki2005BayesianPredictive}.  We include the proof to keep
the subsequent optimization over measurements self-contained and to record the
posterior entropy that is optimized in Section~\ref{sec:global_optimization}.

\begin{theorem}[Posterior predictive state and posterior entropy]
\label{thm:fixed_measurement_bayes_act}
\label{cor:fixed_measurement_entropy_risk}
For any fixed POVM \(M^{(n)}\) on the observed copies, and for \(p_M\)-almost
every outcome \(x\), the posterior expected relative-entropy loss
\[
    \sigma\longmapsto
    \int D(\rho_\phi\Vert\sigma)\,\pi_M(d\phi\mid x)
\]
over density matrices \(\sigma\in\mathcal S(\mathcal H)\) is minimized by
\(\sigma=\mu_M(x)\).  Moreover, the minimum value is \(S(\mu_M(x))\).  Hence the
optimized fixed-measurement prediction risk is
\[
    R(M)
    :=
    \inf_\sigma R(M,\sigma)
    =
    \int_{\mathsf X} S(\mu_M(x))\,p_M(dx).
\]
\end{theorem}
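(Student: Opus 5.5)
The plan is to use the standard ``compensation identity'' for relative entropy.

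For a pure state \(\rho_\phi\) we have \(S(\rho_\phi)=0\). Hence for any density matrix \(\sigma\), with the convention \(D=+\infty\) off-support,
\[
D(\rho_\phi\Vert\sigma)=-\operatorname{Tr}[\rho_\phi\log\sigma].
\]
Here \(\log\sigma\) is understood on the support of \(\sigma\), with value \(-\infty\) on the kernel. Fix an outcome \(x\) in the full-measure set where the posterior \(\pi_M(\cdot\mid x)\) is a well-defined probability measure, and abbreviate \(\mu:=\mu_M(x)\).

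First consider \(\sigma\) with \(\operatorname{supp}\mu\subseteq\operatorname{supp}\sigma\). Linearity of the trace and the definition of the posterior mean give
\[
\int D(\rho_\phi\Vert\sigma)\,\pi_M(d\phi\mid x)=-\operatorname{Tr}[\mu\log\sigma].
\]
Since \(\phi\mapsto-\operatorname{Tr}[\rho_\phi\log\sigma]\) is bounded on the support-compatible set, the interchange of integral and trace is justified. We then add and subtract \(\operatorname{Tr}[\mu\log\mu]\) to obtain
\[
\int D(\rho_\phi\Vert\sigma)\,\pi_M(d\phi\mid x)=S(\mu)+D(\mu\Vert\sigma).
\]
Klein's inequality gives \(D(\mu\Vert\sigma)\ge0\), with equality iff \(\sigma=\mu\). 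So the minimum over such \(\sigma\) is \(S(\mu)\), attained uniquely at \(\sigma=\mu\). Note that \(S(\mu)\le\log d<\infty\).

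Next, handle \(\sigma\) whose support does not contain \(\operatorname{supp}\mu\). This is the step needing care. We show that such \(\sigma\) yield posterior expected loss \(+\infty\). Let \(P\) be the projection onto \(\ker\sigma\). Then \(\operatorname{Tr}[\mu P]>0\), so
\[
\int\operatorname{Tr}[\rho_\phi P]\,\pi_M(d\phi\mid x)>0.
\]
Hence the set of \(\phi\) with \(\langle\phi|P|\phi\rangle>0\) has positive posterior mass. On that set \(\operatorname{supp}\rho_\phi\not\subseteq\operatorname{supp}\sigma\), so \(D(\rho_\phi\Vert\sigma)=+\infty\), and the integral is \(+\infty\). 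Such \(\sigma\) therefore cannot beat \(\mu\).

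Finally, the risk formula follows from the disintegration formula for \(R(M,\sigma)\). Since the integrand is nonnegative, minimizing \(R(M,\sigma)\) over measurable rules reduces to minimizing outcome-wise. Concretely, \(R(M,\sigma)\ge\int S(\mu_M(x))\,p_M(dx)\) for every rule \(\sigma\), with equality for \(\sigma=\mu_M\).

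It remains to check that \(x\mapsto\mu_M(x)\) is a measurable rule. Writing \(\mu_M(x)\) as a Radon--Nikodym derivative of the operator-valued measure
\[
\int\rho_\phi\operatorname{Tr}[\rho_\phi^{\otimes n}M^{(n)}(dx)]\,\nu(d\phi)
\]
with respect to \(p_M\), which is fine on a finite-dimensional space, gives measurability. This yields \(R(M)=\int S(\mu_M(x))\,p_M(dx)\). I expect the support/infinite-loss case and the measurability of the posterior mean to be the only delicate points; the core is the one-line identity above.
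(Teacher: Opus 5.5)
Your proposal is correct and follows essentially the same route as the paper. The off-support case is dispatched by the projection onto $\ker\sigma$ carrying positive posterior mass; otherwise the posterior loss is rewritten as $-\operatorname{Tr}[\mu\log\sigma]=S(\mu)+D(\mu\Vert\sigma)$, minimized via Klein's inequality, and then integrated against $p_M$. Your added remarks on measurability of $x\mapsto\mu_M(x)$ and on reducing $\inf_\sigma R(M,\sigma)$ to the outcome-wise minimum make explicit what the paper leaves tacit, as does the easy fact that $\operatorname{Tr}[\mu P]=0$ forces $\operatorname{supp}\rho_\phi\subseteq\operatorname{supp}\sigma$ for posterior-almost every $\phi$ in the support-compatible case.
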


\begin{proof}
Fix an outcome \(x\) for which the posterior is defined, and write
\[
    L_x(\sigma)
    :=
    \int D(\rho_\phi\Vert\sigma)\,\pi_M(d\phi\mid x).
\]
If \(\operatorname{supp}\mu_M(x)\nsubseteq\operatorname{supp}\sigma\), let
\(P_0\) be the projector onto \(\ker\sigma\).  Then
\(\operatorname{Tr}[\mu_M(x)P_0]>0\).  Since
\[
    \operatorname{Tr}[\mu_M(x)P_0]
    =
    \int \operatorname{Tr}[\rho_\phi P_0]\,\pi_M(d\phi\mid x),
\]
the set of rays for which \(\operatorname{Tr}[\rho_\phi P_0]>0\) has positive
posterior measure.  On that set,
\(\operatorname{supp}\rho_\phi\nsubseteq\operatorname{supp}\sigma\), so the
relative-entropy loss is infinite.  Hence \(L_x(\sigma)=+\infty\), and a finite
minimizer must satisfy
\[
    \operatorname{supp}\mu_M(x)
    \subseteq
    \operatorname{supp}\sigma.
\]
For such \(\sigma\), since each \(\rho_\phi\) is pure,
\[
    D(\rho_\phi\Vert\sigma)
    =
    -\operatorname{Tr}\rho_\phi\log\sigma.
\]
Therefore
\[
\begin{aligned}
    L_x(\sigma)
    &=
    -\int \operatorname{Tr}\rho_\phi\log\sigma\,\pi_M(d\phi\mid x)       \\
    &=
    -\operatorname{Tr}\!\left[
        \left(\int \rho_\phi\,\pi_M(d\phi\mid x)\right)
        \log\sigma
    \right]                                                          \\
    &=
    -\operatorname{Tr}\mu_M(x)\log\sigma                              \\
    &=
    S(\mu_M(x))+D(\mu_M(x)\Vert\sigma).
\end{aligned}
\]
The first term depends only on the posterior generated by the measurement and
the observed outcome, while the second term is nonnegative and vanishes exactly
when \(\sigma=\mu_M(x)\).  Thus the posterior predictive state is
\(\mu_M(x)\), and the minimized posterior loss is \(S(\mu_M(x))\).  Integrating
this pointwise minimum with respect to the prior predictive distribution
\(p_M(dx)\) gives the displayed formula for \(R(M)\).
\end{proof}

Thus, once the measurement on the observed copies is fixed, the only remaining
measurement-dependent quantity is the average entropy of the posterior
predictive state.  This is the quantity optimized over all collective measurements in
Section~\ref{sec:global_optimization}.

\subsection{Pure-state geometry}
\label{subsec:pure_state_fidelity_geometry}
We use the following pure-state fidelity notation throughout the paper.  For
rays \(\phi,\psi\in\mathbb P(\mathcal H)\), let
\begin{equation}
    f(\phi,\psi)
    :=
    \operatorname{Tr}[\rho_\phi\rho_\psi]
    =
    |\langle\phi\mid\psi\rangle|^2.
    \label{eq:pure_state_fidelity_definition}
\end{equation}
The squared pure-state infidelity, denoted by \(d_b^2\), is
\begin{equation}
    d_b^2(\rho_\phi,\rho_\psi)
    :=
    1-f(\phi,\psi)
    =
    1-|\langle\phi\mid\psi\rangle|^2.
    \label{eq:db_squared_pure_state_infidelity}
\end{equation}
We also use the Fubini--Study distance on projective space normalized by
\begin{equation}
    d_{\mathrm{FS}}([\phi],[\psi])
    :=
    \arccos |\langle\phi\mid\psi\rangle|.
    \label{eq:FS_distance_definition_global}
\end{equation}
With this convention,
\begin{equation}
    f(\phi,\psi)
    =
    \cos^2 d_{\mathrm{FS}}([\phi],[\psi]),
    ~
    d_b^2(\rho_\phi,\rho_\psi)
    =
    \sin^2 d_{\mathrm{FS}}([\phi],[\psi]).
    \label{eq:infidelity_FS_relation_global}
\end{equation}
Thus pure-state infidelity and the Fubini--Study distance are related by
this fixed convention.

\subsection{Randomized measurements and affinity of the prediction risk}
\label{subsec:randomized_measurements_affinity}
Let \(M_1(dx_1)\) and \(M_2(dx_2)\) be two POVMs on the observed copies.  Their
branch-labelled randomized mixture with weights \(0\le\lambda\le1\) is the POVM
\[
    M:=\lambda M_1\oplus(1-\lambda)M_2
\]
with outcome space
\[
    (\{1\}\times\mathsf X_1)\sqcup(\{2\}\times\mathsf X_2).
\]
The branch label is part of the observed data.  Conditioned on branch \(j\), the
posterior distribution and posterior predictive state are exactly those
generated by \(M_j\).
\begin{lemma}[Affinity under branch-labelled randomized measurements]
\label{lem:measurement_affinity}
For branch-labelled randomized measurements,
\[
    R(\lambda M_1\oplus(1-\lambda)M_2)
    =
    \lambda R(M_1)+(1-\lambda)R(M_2).
\]
\end{lemma}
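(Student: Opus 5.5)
The approach is to compute the outcome law and the posteriors of the mixed POVM explicitly, branch by branch, and then apply the entropy formula of Theorem~\ref{thm:fixed_measurement_bayes_act} to each branch.

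First, for a fixed ray \(\phi\), the outcome law of \(M=\lambda M_1\oplus(1-\lambda)M_2\) on the disjoint union is
\[
p_M(\{j\}\times dx_j\mid\phi)=\lambda_j\,\operatorname{Tr}\!\left[\rho_\phi^{\otimes n}M_j(dx_j)\right]=\lambda_j\,p_{M_j}(dx_j\mid\phi),
\]
where \(\lambda_1=\lambda\) and \(\lambda_2=1-\lambda\). This is immediate from the definition of the mixture POVM, whose effects are \(\lambda_j M_j(dx_j)\) on the \(j\)-th branch. Integrating against \(\nu\) gives the prior predictive law \(p_M(\{j\}\times dx_j)=\lambda_j\,p_{M_j}(dx_j)\).

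Next, I identify the posterior. On branch \(j\) with \(\lambda_j>0\), the joint law \(\nu(d\phi)\,p_M(\{j\}\times dx_j\mid\phi)\) is \(\lambda_j\) times the joint law for \(M_j\). The constant factor cancels in Bayes' rule, so \(\pi_M(d\phi\mid(j,x_j))=\pi_{M_j}(d\phi\mid x_j)\) for \(p_{M_j}\)-a.e.\ \(x_j\). Consequently \(\mu_M(j,x_j)=\mu_{M_j}(x_j)\). To make the a.e.\ statement rigorous, I check that the kernel \(\pi_{M_j}\) satisfies the disintegration identity \eqref{eq:procedure_joint_law_disintegration} for the mixture's joint law restricted to branch \(j\). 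Uniqueness of disintegrations up to null sets then gives the equality. This is the only step needing any care. A branch with \(\lambda_j=0\) carries zero \(p_M\)-mass and is irrelevant.

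Finally, I apply Theorem~\ref{thm:fixed_measurement_bayes_act}:
\[
R(M)=\sum_{j}\int S(\mu_M(j,x_j))\,p_M(\{j\}\times dx_j)=\sum_j\lambda_j\int S(\mu_{M_j}(x_j))\,p_{M_j}(dx_j)=\lambda R(M_1)+(1-\lambda)R(M_2).
\]
The entropies are bounded by \(\log d\), so every integral is finite and splitting the integral over the disjoint union into its two branches is legitimate.
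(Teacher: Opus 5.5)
Your proposal is correct and follows the same route as the paper: the mixture's prior-predictive outcome law is the branch-weighted mixture of the two outcome laws, the posterior (hence the posterior predictive state and its entropy) on branch $j$ coincides with that of $M_j$, and integrating over the two branches via Theorem~\ref{thm:fixed_measurement_bayes_act} gives the identity. Your additional checks (uniqueness of the disintegration up to null sets, the bound $S\le\log d$, and the $\lambda_j=0$ case) make explicit what the paper's short proof leaves implicit.
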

\begin{proof}
The prior predictive outcome distribution of the randomized experiment is the
corresponding branch-labelled mixture of the two outcome distributions.  On
branch \(j\), the posterior predictive state and its entropy coincide with those
of \(M_j\).  Integrating over the two branches gives the displayed equality.
\end{proof}
This affine behavior will also be useful whenever a measurement is explicitly
implemented as a branch-labelled randomization.  The global proof below,
however, does not require a seed or orbit decomposition.

\section{Covariance Tools for Bayesian Prediction}
\label{sec:covariant_reduction}

This section records covariance tools used for Bayes--minimax comparisons and
for interpreting posterior states under equivariant coarse-graining.  The only
part needed for the main global minimax conclusion is the constant-risk
consequence of covariance.  In particular, the global Bayes optimum in
Section~\ref{sec:global_optimization} is proved directly for arbitrary POVM
effects and does not rely on the Hunt--Stein reduction or on a restriction to
projective-space outcomes.

\subsection{Haar prior and unitary covariance}
\label{subsec:haar_covariance}

The unitary group \(U(d)\) acts transitively on pure states by
\[
    \rho_\phi\longmapsto U\rho_\phi U^\dagger.
\]
On the \(n\)-copy input space the induced action is \(U^{\otimes n}\),
restricted to \(\operatorname{Sym}^n(\mathcal H)\).  The Haar prior is
invariant under this action, and relative entropy satisfies
\[
    D(U\rho U^\dagger\Vert U\sigma U^\dagger)
    =
    D(\rho\Vert\sigma).
\]
Thus the full Bayes problem is covariant.  In the present manuscript,
covariance is used for Bayes--minimax comparisons and for the expected-state
depolarizing arguments below.  The global Bayes optimum itself is proved
directly in Section~\ref{sec:global_optimization} by an outcome-wise
arbitrary-effect argument.

\subsection{Reduction to covariant decision procedures}
\label{subsec:covariant_povm_reduction}

We now prove the covariant reduction in the form needed in this paper.  The
statement is a direct Hunt--Stein averaging argument, adapted to the present
Bayesian relative-entropy decision problem.

Let
\[
    \Theta:=\mathbb P(\mathcal H)
\]
be the pure-state parameter space, equipped with its invariant probability
measure \(\mu_\Theta\).  Let
\[
    \rho_\theta:=\ket{\theta}\bra{\theta}
\]
and let the \(n\)-copy input state be \(\rho_\theta^{\otimes n}\).  The group
\(G=U(d)\) acts transitively on \(\Theta\), and we write
\[
    g\theta:=U_g\theta,
    \qquad
    \rho_{g\theta}=U_g\rho_\theta U_g^\dagger.
\]
On the input system the representation is \(U_g^{\otimes n}\).  The decision
space is the compact convex set
\[
    \mathcal A:=\mathcal S(\mathcal H)
\]
of density matrices on the single-copy Hilbert space, with group action
\[
    g\sigma:=U_g\sigma U_g^\dagger.
\]
The loss is
\[
    \ell(\theta,\sigma)
    :=
    D(\rho_\theta\Vert\sigma).
\]
It is invariant in the sense that
\[
    \ell(g\theta,g\sigma)=\ell(\theta,
    \sigma).
\]

A decision procedure can be represented as a POVM \(N(d\sigma)\) on the
decision space \(\mathcal A\): after observing the outcome \(\sigma\), the
procedure outputs that density matrix.  This representation includes the usual
``measurement plus decision rule'' formulation, because a POVM \(M(dx)\) and a
measurable decision rule \(x\mapsto\sigma(x)\) induce the push-forward POVM
\[
    N(B):=M\bigl(\{x:\sigma(x)\in B\}\bigr),
    \qquad B\subset\mathcal A.
\]
Conversely, a POVM on \(\mathcal A\) is simply a randomized decision rule whose
outcome is already the announced density matrix.

For a decision POVM \(N\), define its frequentist risk at \(\theta\) by
\[
    R_\theta(N)
    :=
    \int_{\mathcal A}
    \ell(\theta,\sigma)
    \operatorname{Tr}\!\left[
        \rho_\theta^{\otimes n}N(d\sigma)
    \right].
\]
The Haar-Bayes risk and the worst-case risk are
\[
    R_{\mu_\Theta}(N)
    :=
    \int_\Theta R_\theta(N)\,\mu_\Theta(d\theta),
    \quad
    R_{\max}(N):=\sup_{\theta\in\Theta}R_\theta(N).
\]
A decision POVM is called covariant if
\[
    N(gB)
    =
    U_g^{\otimes n}N(B)U_g^{\otimes n\dagger}
\]
for every Borel set \(B\subset\mathcal A\), where
\(gB:=\{g\sigma:\sigma\in B\}\).

The next theorem is the standard compact-group Hunt--Stein averaging argument
in the decision-theoretic form needed here
\cite{Holevo1979CovariantMeasurements,Holevo1982ProbabilisticStatistical}.
Its role is to support Bayes--minimax comparisons and constant-risk statements.
It is not used to restrict the outcome space in the proof of the global Bayes
optimum: Section~\ref{sec:global_optimization} proves that result directly for
the positive density of an arbitrary POVM.

\begin{theorem}[Hunt--Stein reduction for the present Bayes problem]
\label{thm:covariant_reduction}
For the Haar-prior pure-state model with relative-entropy loss, the minimum
Haar-Bayes risk over all decision procedures is equal to the minimum Haar-Bayes
risk over covariant decision procedures.  Moreover, for every decision POVM
\(N\), there exists a covariant decision POVM \(\overline N\) such that
\[
    R_{\mu_\Theta}(\overline N)=R_{\mu_\Theta}(N).
\]
For a covariant decision POVM \(\overline N\), the risk is constant on
\(\Theta\), and hence
\[
    R_{\max}(\overline N)=R_{\mu_\Theta}(\overline N).
\]
In particular, if a covariant procedure \(N_\star\) attains the Bayes optimum
\(B\), then its constant frequentist risk equals \(B\), and
\[
    B
    \le
    \inf_N R_{\max}(N)
    \le
    R_{\max}(N_\star)
    =
    R_{\mu_\Theta}(N_\star)
    =B.
\]
Hence \(N_\star\) is also minimax and the minimax value equals \(B\).
\end{theorem}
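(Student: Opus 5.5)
The plan is the standard compact-group averaging argument. Given an arbitrary decision POVM \(N\) on \(\mathcal A\), define its group average
\[
    \overline N(B)
    :=
    \int_G U_g^{\otimes n\dagger}\,N(gB)\,U_g^{\otimes n}\,dg ,
\]
where \(dg\) is normalized Haar measure on \(G=U(d)\). First I will check that \(\overline N\) is a POVM. Positivity and normalization are immediate since \(g\mathcal A=\mathcal A\). Countable additivity follows from monotone convergence applied to the weak integrals. Covariance is obtained by the substitution \(g\mapsto gh\) together with invariance of \(dg\); this gives \(\overline N(hB)=U_h^{\otimes n}\overline N(B)U_h^{\otimes n\dagger}\). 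The only measure-theoretic point is that \(g\mapsto \operatorname{Tr}[X N(gB)]\) is measurable. I would handle this by first defining \(\overline N\) through its action on bounded continuous functions, \(f\mapsto\int_G\int f(g^{-1}\sigma)\,U_g^{\otimes n\dagger}N(d\sigma)U_g^{\otimes n}dg\), and then invoking Riesz representation on the compact space \(\mathcal A\).

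Next I will compute the risks. For covariant \(\overline N\), the invariance \(\ell(g\theta,g\sigma)=\ell(\theta,\sigma)\) and the change of variables \(\sigma\mapsto g\sigma\) give \(R_{g\theta}(\overline N)=R_\theta(\overline N)\). Since \(G\) acts transitively on \(\Theta\), the risk is constant, so \(R_{\max}(\overline N)=R_{\mu_\Theta}(\overline N)\). For the Bayes risk equality, I will expand
\[
    R_\theta(\overline N)
    =
    \int_G\int \ell(\theta,g^{-1}\sigma)\operatorname{Tr}\bigl[\rho_{g\theta}^{\otimes n}N(d\sigma)\bigr]dg
    =
    \int_G R_{g\theta}(N)\,dg ,
\]
using \(\ell(\theta,g^{-1}\sigma)=\ell(g\theta,\sigma)\). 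Integrating over \(\mu_\Theta\), applying Tonelli (the loss is nonnegative, possibly \(+\infty\)), and using invariance of \(\mu_\Theta\) yields \(R_{\mu_\Theta}(\overline N)=R_{\mu_\Theta}(N)\). The equality of the two optima follows, since covariant procedures form a subset of all procedures and averaging maps every \(N\) to a covariant one with the same Bayes risk.

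For the minimax chain, the first inequality \(B\le\inf_N R_{\max}(N)\) holds because \(R_{\mu_\Theta}(N)\le R_{\max}(N)\) and \(B\) is the Bayes infimum. The remaining steps are \(R_{\max}(N_\star)\) bounding the infimum from above and the constant-risk identity just proved.

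I expect the main obstacle to be the rigorous handling of \(\overline N\) together with a loss that is unbounded and can take the value \(+\infty\). The measurability issues and the use of Tonelli rather than Fubini must be justified carefully. The function \(\ell\) is lower semicontinuous and nonnegative, so it is measurable and Tonelli applies with values in \([0,\infty]\). The integrals of the operator-valued measures should be defined via a Riesz representation rather than pointwise in \(B\).
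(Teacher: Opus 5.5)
Your proposal is correct and follows essentially the same route as the paper. Both arguments group-average $N$ via $U_g^{\otimes n\dagger}N(gB)U_g^{\otimes n}$, obtain covariance from the substitution $k=gh$ and right invariance of Haar measure, use $\ell(\theta,g^{-1}\sigma)=\ell(g\theta,\sigma)$ together with invariance of $\mu_\Theta$ to preserve the Haar-Bayes risk, and invoke transitivity for the constant risk of covariant procedures. The only differences are that you fold the paper's intermediate rotated POVMs $N_g$ into a single computation, and that you supply measure-theoretic justifications the paper leaves implicit: Riesz representation to define $\overline N$, and Tonelli for the $[0,\infty]$-valued lower semicontinuous loss.
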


\begin{proof}
Fix an
arbitrary decision POVM \(N\).  For each \(g\in G\), define the rotated decision
POVM
\[
    N_g(B)
    :=
    U_g^{\otimes n\dagger}N(gB)U_g^{\otimes n},
    \qquad B\subset\mathcal A.
    \label{eq:rotated_decision_povm_def}
\]
We first compute its risk.  Using the change of decision variable
\(\omega=g\sigma\), the covariance of the state family, and the
loss-invariance relation displayed above, we obtain
\[
\begin{aligned}
    R_\theta(N_g)
    &=
    \int_{\mathcal A}
    \ell(\theta,\sigma)
    \operatorname{Tr}\!\left[
        \rho_\theta^{\otimes n}
        U_g^{\otimes n\dagger}N(gd\sigma)U_g^{\otimes n}
    \right]                                                   \\
    &=
    \int_{\mathcal A}
    \ell(\theta,g^{-1}\omega)
    \operatorname{Tr}\!\left[
        (U_g^{\otimes n}\rho_\theta^{\otimes n}U_g^{\otimes n\dagger})
        N(d\omega)
    \right]                                                   \\
    &=
    \int_{\mathcal A}
    \ell(g\theta,\omega)
    \operatorname{Tr}\!\left[
        \rho_{g\theta}^{\otimes n}N(d\omega)
    \right]                                                   
    =
    R_{g\theta}(N).
\end{aligned}
\]
Averaging this identity over \(\theta\) and using invariance of
\(\mu_\Theta\), we get
\[
    R_{\mu_\Theta}(N_g)
    =
    \int_\Theta R_{g\theta}(N)\,\mu_\Theta(d\theta)
    =
    R_{\mu_\Theta}(N).
\]

Now define the averaged POVM
\[
    \overline N(B)
    :=
    \int_G N_g(B)\,\mu_G(dg),
    \label{eq:averaged_decision_povm_def}
\]
where \(\mu_G\) is the normalized Haar measure on the compact group \(G\).  The
operator integral is well defined in the weak sense; positivity and
normalization follow from positivity and normalization of each \(N_g\).
Linearity of the risk in the POVM gives
\[
\begin{aligned}
    R_{\mu_\Theta}(\overline N)
    &=
    \int_G R_{\mu_\Theta}(N_g)\,\mu_G(dg) 
    =
    R_{\mu_\Theta}(N),
\end{aligned}
\]
where the last equality is the preceding equality.  Hence averaging
does not change the Haar-Bayes risk.

It remains to verify covariance.  For \(h\in G\),
\[
\begin{aligned}
    \overline N(hB)
    &=
    \int_G
    U_g^{\otimes n\dagger}N(ghB)U_g^{\otimes n}
    \,\mu_G(dg).
\end{aligned}
\]
Set \(k=gh\).  By right invariance of Haar measure, \(\mu_G(dg)=\mu_G(dk)\), and
\[
\begin{aligned}
    \overline N(hB)
    &=
    \int_G
    U_{kh^{-1}}^{\otimes n\dagger}N(kB)U_{kh^{-1}}^{\otimes n}
    \,\mu_G(dk)                                                \\
    &=
    U_h^{\otimes n}
    \left(
        \int_G U_k^{\otimes n\dagger}N(kB)U_k^{\otimes n}
        \,\mu_G(dk)
    \right)
    U_h^{\otimes n\dagger}                                      \\
    &=
    U_h^{\otimes n}\overline N(B)U_h^{\otimes n\dagger}.
\end{aligned}
\]
Thus \(\overline N\) is covariant in the sense of
the covariance condition displayed above.

Finally, suppose \(N\) is covariant.  Then the first calculation above, applied
to \(N\) itself, gives
\[
    R_{g\theta}(N)=R_\theta(N)
\]
for all \(g\in G\).  Since \(G\) acts transitively on \(\Theta\), the risk is
constant on \(\Theta\).  Hence
\[
    R_{\max}(N)=R_{\mu_\Theta}(N).
\]
Combining this with the averaging construction proves that the Haar-Bayes
optimization may be restricted to covariant decision POVMs, and that for
covariant procedures the Haar-Bayes and worst-case risks coincide.

\end{proof}

\subsection{Depolarizing form of an expected covariant decision}
\label{subsec:covariant_expected_decision}

This subsection records two consequences of covariance that will be used
separately below.  The first concerns the input-fixed average of a covariant
decision.  The second concerns loss of information when a measurement is
coarse-grained by an equivariant statistic.  The latter formulation includes
subgroup-orbit coarse-graining, but does not require every fiber of the
statistic to be a single group orbit.

Let \(M\) be a measurement with outcome space \(\mathsf X\), and assume that
\(U(d)\) acts measurably on \(\mathsf X\).  We write its covariance as
\begin{equation}
    p_M(d(gx)\mid g\psi)=p_M(dx\mid\psi).
    \label{eq:covariant_observation_model_expected_decision}
\end{equation}
The posterior predictive state is
\begin{equation}
    \rho_M(x)
    :=
    \int \rho_\phi\,\pi_M(d\phi\mid x),
    \label{eq:posterior_state_covariant_decision_section}
\end{equation}
and Haar invariance implies the pointwise covariance relation
\begin{equation}
    \rho_M(gx)=U_g\rho_M(x)U_g^\dagger.
    \label{eq:posterior_state_pointwise_covariance}
\end{equation}
More generally, let \(x\mapsto\sigma(x)\) be a pointwise covariant announced
state:
\begin{equation}
    \sigma(gx)=U_g\sigma(x)U_g^\dagger.
    \label{eq:pointwise_covariant_decision_rule}
\end{equation}
For a fixed true state
\(\rho_\star=\ket{\psi_\star}\bra{\psi_\star}\), define
\begin{equation}
    \overline\rho_{\sigma}^{\mathrm{in}}(\psi_\star)
    :=
    \mathbb E_{X\mid\psi_\star}[\sigma(X)].
    \label{eq:expected_announced_state_fixed_input}
\end{equation}

\begin{lemma}[Depolarizing form of the expected decision]
\label{lem:depolarizing_form_expectation_general}
The input-fixed expected announced state has the form
\begin{equation}
\begin{split}
    \overline\rho_{\sigma}^{\mathrm{in}}(\psi_\star)
    &=
    \tau_\sigma\rho_\star
    +
    \frac{1-\tau_\sigma}{d-1}(I-\rho_\star)                                    \\
    &=
    \alpha_\sigma\rho_\star
    +
    \beta_\sigma\frac{I}{d},
    \qquad
    \alpha_\sigma+\beta_\sigma=1,
\end{split}
\label{eq:expected_decision_two_depolarizing_parameterizations}
\end{equation}
where
\begin{equation}
    \tau_\sigma
    =
    \operatorname{Tr}\!\left[
       \rho_\star\overline\rho_{\sigma}^{\mathrm{in}}(\psi_\star)
    \right]
    =
    \alpha_\sigma+\frac{\beta_\sigma}{d}.
    \label{eq:expected_decision_tau_definition_local}
\end{equation}
These scalars are independent of the true ray.
\end{lemma}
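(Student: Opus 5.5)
The plan is to show that \(\overline\rho_{\sigma}^{\mathrm{in}}(\psi_\star)\) commutes with the stabilizer of \(\rho_\star\) in \(U(d)\). Schur-type reasoning then forces the two-parameter form, and a second covariance argument shows that the parameters do not depend on \(\psi_\star\).

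\emph{Step 1 (equivariance of the expected decision).} For \(g\in U(d)\), I will use the covariance \eqref{eq:covariant_observation_model_expected_decision} to change variables \(x\mapsto gx\), together with the pointwise covariance \eqref{eq:pointwise_covariant_decision_rule}. This gives
\[
\overline\rho_{\sigma}^{\mathrm{in}}(g\psi_\star)=\int\sigma(x')\,p_M(dx'\mid g\psi_\star)=\int\sigma(gx)\,p_M(dx\mid\psi_\star)=U_g\,\overline\rho_{\sigma}^{\mathrm{in}}(\psi_\star)U_g^\dagger .
\]
The only care needed is to read \eqref{eq:covariant_observation_model_expected_decision} as equality of push-forward measures, so that \(\int f(x')\,p_M(dx'\mid g\psi)=\int f(gx)\,p_M(dx\mid\psi)\) for bounded measurable operator-valued \(f\). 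The function \(\sigma\) is bounded because density matrices have trace one, so the integral is well defined.

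\emph{Step 2 (stabilizer invariance gives the depolarizing form).} I will take \(g\) in the stabilizer \(\{U:U\ket{\psi_\star}=e^{i\theta}\ket{\psi_\star}\}\cong U(1)\times U(d-1)\). Then \(g\psi_\star=\psi_\star\) as rays, so Step 1 shows that \(A:=\overline\rho_{\sigma}^{\mathrm{in}}(\psi_\star)\) commutes with every such \(U_g\). Choosing \(U=\rho_\star-(I-\rho_\star)\) shows that \(A\) is block diagonal with respect to \(\mathbb C\ket{\psi_\star}\oplus\ket{\psi_\star}^\perp\). Schur's lemma for the irreducible defining action of \(U(d-1)\) on \(\ket{\psi_\star}^\perp\) then makes the lower block a scalar. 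This scalar step is the main point of the argument; if \(d-1=1\) it is trivial. Hence
\[
A=a\rho_\star+b(I-\rho_\star),
\]
and \(\operatorname{Tr}A=1\) gives \(b=(1-a)/(d-1)\). Setting \(\tau_\sigma:=a=\operatorname{Tr}[\rho_\star A]\) yields the first form in \eqref{eq:expected_decision_two_depolarizing_parameterizations}.

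\emph{Step 3 (reparametrization and ray-independence).} The second form follows by writing \(b(I-\rho_\star)=bdI/d-b\rho_\star\). This gives \(\alpha_\sigma=\tau_\sigma-b\) and \(\beta_\sigma=db\), so \(\alpha_\sigma+\beta_\sigma=\tau_\sigma+(d-1)b=1\) and \(\tau_\sigma=\alpha_\sigma+\beta_\sigma/d\), which is \eqref{eq:expected_decision_tau_definition_local}. For independence of the ray, Step 1 gives
\[
\operatorname{Tr}\bigl[\rho_{g\psi_\star}\overline\rho_{\sigma}^{\mathrm{in}}(g\psi_\star)\bigr]=\operatorname{Tr}\bigl[U_g\rho_\star U_g^\dagger\,U_gAU_g^\dagger\bigr]=\tau_\sigma .
\]
Transitivity of \(U(d)\) on \(\mathbb P(\mathcal H)\) then shows that \(\tau_\sigma\), and hence \(\alpha_\sigma\) and \(\beta_\sigma\), are the same for every true ray.
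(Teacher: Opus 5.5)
Your proof is correct and follows the same route as the paper. Both arguments use covariance of the expected decision, invariance under the stabilizer of the ray, Schur's lemma on \(\psi_\star^\perp\), the trace-one normalization, and transitivity for ray-independence; your reflection \(2\rho_\star-I\) (for block-diagonality) and your push-forward change of variables (for equivariance) simply make explicit steps the paper leaves implicit.
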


\begin{proof}
Covariance gives
\begin{equation}
    \overline\rho_{\sigma}^{\mathrm{in}}(g\psi_\star)
    =
    U_g\overline\rho_{\sigma}^{\mathrm{in}}(\psi_\star)U_g^\dagger.
    \label{eq:expected_decision_covariance_proof}
\end{equation}
If \(h\) stabilizes the ray of \(\psi_\star\), the left-hand side is
unchanged.  Hence the expected state commutes with the stabilizer, which acts
as the full unitary group on \(\psi_\star^\perp\).  Schur's lemma gives a
scalar on \(\mathbb C\psi_\star\) and another scalar on
\(\psi_\star^\perp\).  The trace-one condition gives the first representation
in \eqref{eq:expected_decision_two_depolarizing_parameterizations}; the second
is its rewriting in the basis \(\{\rho_\star,I/d\}\).  Transitivity makes the
coefficients independent of the true ray.
\end{proof}

We next formulate coarse-graining by a statistic.  Let
\begin{equation}
    T:\mathsf X\longrightarrow\mathsf Y,
    \qquad
    Y=T(X),
    \label{eq:equivariant_statistic_coarse_graining_map}
\end{equation}
be measurable and equivariant:
\begin{equation}
    T(gx)=gT(x).
    \label{eq:equivariant_statistic_condition}
\end{equation}
The push-forward measurement is denoted by \(M_T\).  Its posterior predictive
state is
\begin{equation}
\begin{split}
    \rho_{M_T}(y)
    &:={\mathbb E}[\rho_\Phi\mid Y=y]                                          \\
    &={\mathbb E}[\rho_M(X)\mid T(X)=y].
\end{split}
\label{eq:coarse_posterior_state_conditional_expectation}
\end{equation}
The second equality is the tower property.  If \(T\) is the quotient by a
compact subgroup \(H\), and the conditional law within the fiber is normalized
Haar measure, this conditional expectation reduces to the usual \(H\)-twirl.
The conditional-expectation formulation applies to an arbitrary measurable
equivariant statistic and does not require its fibers to be individual group
orbits.

Define
\begin{equation}
    R(M):=\mathbb E_X S(\rho_M(X)),
    \hspace{5pt}
    R(M_T):=\mathbb E_Y S(\rho_{M_T}(Y)).
    \label{eq:fine_coarse_posterior_entropy_risks}
\end{equation}

\begin{proposition}[Risk increase under equivariant coarse-graining]
\label{prop:risk_increase_under_orbit_coarse_graining}
Every measurable coarse-graining satisfies
\begin{equation}
    R(M)\le R(M_T).
    \label{eq:fine_risk_le_coarse_risk}
\end{equation}
Moreover,
\begin{equation}
    R(M_T)-R(M)
    =
    \mathbb E_X
    D\!\left(
       \rho_M(X)\middle\|\rho_{M_T}(T(X))
    \right).
    \label{eq:coarse_graining_entropy_gap_relative_entropy}
\end{equation}
\end{proposition}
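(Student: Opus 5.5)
The plan is to prove the identity \eqref{eq:coarse_graining_entropy_gap_relative_entropy}; the inequality \eqref{eq:fine_risk_le_coarse_risk} then follows at once, since quantum relative entropy is nonnegative. Write \(Y=T(X)\). By \eqref{eq:coarse_posterior_state_conditional_expectation} and the tower property, \(\rho_{M_T}(Y)=\mathbb E[\rho_M(X)\mid Y]\). So at the coarse level \(\rho_{M_T}(y)\) plays the same role that \(\mu_M(x)\) plays for the fine posterior in Theorem~\ref{thm:fixed_measurement_bayes_act}. Equivariance of \(T\) is not needed for the identity; only measurability and the tower property are used.

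The key step is the ``entropy of a mixture'' decomposition, the same algebra that appears at the end of the proof of Theorem~\ref{thm:fixed_measurement_bayes_act}. I first check supports: \(\operatorname{supp}\rho_M(x)\subseteq\operatorname{supp}\rho_{M_T}(T(x))\) for almost every \(x\). This holds because if a projector \(P_0\) onto \(\ker\rho_{M_T}(y)\) had \(\operatorname{Tr}[\rho_M(X)P_0]>0\) on a set of positive conditional probability given \(Y=y\), then averaging would give \(\operatorname{Tr}[\rho_{M_T}(y)P_0]>0\), which is a contradiction. This argument is identical to the support step in that proof. Given the support inclusion, for almost every \(x\) with \(y=T(x)\) we have
\[
D(\rho_M(x)\Vert\rho_{M_T}(y))=-S(\rho_M(x))-\operatorname{Tr}\rho_M(x)\log\rho_{M_T}(y).
\]
Take the conditional expectation given \(Y=y\). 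The map \(\rho\mapsto\operatorname{Tr}\rho\log\rho_{M_T}(y)\) is linear, and \(\log\rho_{M_T}(y)\) is interpreted on its support, which is legitimate because of the support inclusion. Hence
\[
\mathbb E\bigl[-\operatorname{Tr}\rho_M(X)\log\rho_{M_T}(y)\bigm| Y=y\bigr]=-\operatorname{Tr}\rho_{M_T}(y)\log\rho_{M_T}(y)=S(\rho_{M_T}(y)).
\]
Therefore
\[
\mathbb E\bigl[D(\rho_M(X)\Vert\rho_{M_T}(Y))\bigm|Y\bigr]=S(\rho_{M_T}(Y))-\mathbb E[S(\rho_M(X))\mid Y].
\]
Taking the outer expectation over \(Y\) and applying the tower property, which is Lemma~\ref{lem:input_output_average_identity} in spirit, gives \(\mathbb E_X D(\cdot\Vert\cdot)=R(M_T)-R(M)\).

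The main technical obstacle is measure-theoretic. It involves two points. First, one must justify regular conditional expectations of operator-valued functions and their joint measurability in \(x\). Second, one must handle integrability so that the subtraction is not \(\infty-\infty\). Integrability is harmless here because all entropies are bounded by \(\log d\) in finite dimension. Both sides of the identity are therefore finite, except that the relative-entropy term is a priori \([0,\infty]\)-valued, and the computation above shows its conditional mean equals a finite quantity. For measurability I would work entrywise in a fixed basis, using that \(\rho_M(X)\) is a bounded matrix-valued random variable. The conditional expectation is then taken entrywise, and the resulting matrix is almost surely a density operator. With these points settled, the proof is the algebra above.
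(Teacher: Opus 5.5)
Your proposal is correct and follows essentially the paper's argument. Like the paper, you rest on the conditional identity $\mathbb E[D(\rho_M(X)\Vert\rho_{M_T}(Y))\mid Y]=S(\rho_{M_T}(Y))-\mathbb E[S(\rho_M(X))\mid Y]$, obtained from $\rho_{M_T}(Y)=\mathbb E[\rho_M(X)\mid Y]$ and averaged over $Y$. The only differences are that the paper gets the inequality separately from concavity of entropy while you deduce it from $D\ge0$, and that your support-inclusion and integrability checks fill in details the paper leaves implicit (the nonnegativity of $-\operatorname{Tr}\rho_M(X)\log\rho_{M_T}(Y)$ on the support is what lets you pull out the $Y$-measurable factor).
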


\begin{proof}
By \eqref{eq:coarse_posterior_state_conditional_expectation} and entropy
concavity,
\[
 S(\rho_{M_T}(Y))
 \ge {\mathbb E}[S(\rho_M(X))\mid Y].
\]
Averaging gives \eqref{eq:fine_risk_le_coarse_risk}.  Furthermore,
\[
\begin{aligned}
&{\mathbb E}\!\left[
 D\!\left(\rho_M(X)\middle\|\rho_{M_T}(Y)\right)\middle|Y
\right]\\
&\qquad=
S(\rho_{M_T}(Y))-{\mathbb E}[S(\rho_M(X))\mid Y],
\end{aligned}
\]
which gives the exact gap after averaging.
\end{proof}

Suppose now that \(\mathsf Y=\mathbb P(\mathcal H)\) and that the group acts
transitively.  Write a coarse outcome as a ray \(y\), with projector
\(\rho_y\).  The stabilizer of \(y\) acts irreducibly on \(y^\perp\), while the
covariance of \(M_T\) implies
\[
    U_h\rho_{M_T}(y)U_h^\dagger=\rho_{M_T}(y)
\]
for every stabilizer element \(h\).  Hence
\begin{equation}
    \rho_{M_T}(y)
    =
    r_T\rho_y+
    \frac{1-r_T}{d-1}(I-\rho_y),
    \label{eq:coarse_posterior_depolarizing_form}
\end{equation}
where transitivity makes \(r_T\) independent of \(y\).  When
\(r_T\ge 1/d\), it is the principal eigenvalue.  Setting
\begin{equation}
    h_d(r)
    :=
    -r\log r-(1-r)\log\frac{1-r}{d-1},
    \label{eq:h_d_lambda_definition}
\end{equation}
we obtain
\begin{equation}
    R(M)\le R(M_T)=h_d(r_T).
    \label{eq:fine_risk_upper_bounded_by_coarse_lambda_entropy}
\end{equation}
The parameter has the fidelity representation
\begin{equation}
\begin{split}
    r_T
    &=
    \operatorname{Tr}[\rho_y\rho_{M_T}(y)]                                    \\
    &=
    \int
    \operatorname{Tr}[\rho_\phi\rho_y] \,\pi_{M_T}(d\phi\mid y),
\end{split}
\label{eq:coarse_lambda_output_fidelity}
\end{equation}
which is constant in \(y\).  Thus the reported ray associated with the coarse
outcome determines the depolarizing parameter and therefore the complete
coarse posterior entropy.

For reference, the purity of the coarse posterior is
\begin{equation}
    \operatorname{Tr}[\rho_{M_T}(y)^2]
    =r_T^2+\frac{(1-r_T)^2}{d-1}.
    \label{eq:coarse_posterior_purity_as_p_lambda}
\end{equation}
Averaging and using the posterior-mean identity gives
\begin{equation}
\begin{split}
&\mathbb E_Y\operatorname{Tr}[\rho_{M_T}(Y)^2]                                  \\
&\quad=
\mathbb E_{\Phi,Y}\operatorname{Tr}[\rho_\Phi\rho_{M_T}(Y)]                   \\
&\quad=
\mathbb E_{Y\mid\psi_\star}
\operatorname{Tr}[\rho_\star\rho_{M_T}(Y)] .
\end{split}
\label{eq:coarse_mean_purity_fixed_input_fidelity_identity}
\end{equation}
This purity identity is distinct from the entropy upper bound
\eqref{eq:fine_risk_upper_bounded_by_coarse_lambda_entropy}.

For later use, it is important that coarse-graining moves entropy and purity in
opposite directions.  Define the posterior-overlap functional
\begin{equation}
    \mathcal P(M)
    :=
    \mathbb E_{\Phi,X}
    \operatorname{Tr}[\rho_\Phi\rho_M(X)].
    \label{eq:posterior_overlap_functional_definition}
\end{equation}
By output-first conditioning and the posterior-mean identity,
\begin{equation}
    \mathcal P(M)
    =
    \mathbb E_X\operatorname{Tr}[\rho_M(X)^2].
    \label{eq:posterior_overlap_equals_mean_purity}
\end{equation}
Thus this is simultaneously the joint overlap between the latent pure state
and the posterior predictive state and the mean purity of that posterior
state.

\begin{proposition}[Purity loss under coarse-graining]
\label{prop:purity_loss_under_coarse_graining}
For every measurable statistic \(T\),
\begin{equation}
    \mathcal P(M_T)\le \mathcal P(M).
    \label{eq:coarse_graining_decreases_posterior_purity}
\end{equation}
More precisely,
\begin{equation}
\begin{split}
&\mathcal P(M)-\mathcal P(M_T)\\
&\quad=
\mathbb E_X
\operatorname{Tr}\!\left[
 \left(
   \rho_M(X)-\rho_{M_T}(T(X))
 \right)^2
\right].
\end{split}
\label{eq:coarse_graining_purity_gap_hilbert_schmidt}
\end{equation}
\end{proposition}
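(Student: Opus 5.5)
We prove the Hilbert--Schmidt gap formula directly; the inequality \(\mathcal P(M_T)\le\mathcal P(M)\) then follows because the right-hand side is an expectation of squared Hilbert--Schmidt norms of Hermitian operators, hence nonnegative. The argument mirrors Proposition~\ref{prop:risk_increase_under_orbit_coarse_graining}: the entropy there is replaced by the purity \(\operatorname{Tr}[\rho^2]\), and the relative-entropy gap by a squared-norm gap. Concavity of entropy becomes convexity of \(\rho\mapsto\operatorname{Tr}[\rho^2]\), which explains why the inequality points the other way. Equivariance of \(T\) plays no role; only measurability and the tower property are used.

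First, by \eqref{eq:posterior_overlap_equals_mean_purity} applied to \(M\) and to \(M_T\),
\[
\mathcal P(M)=\mathbb E_X\operatorname{Tr}[\rho_M(X)^2],\qquad
\mathcal P(M_T)=\mathbb E_Y\operatorname{Tr}[\rho_{M_T}(Y)^2],
\]
where \(Y=T(X)\). Conditionally on \(Y\), we expand
\[
\operatorname{Tr}\bigl[(\rho_M(X)-\rho_{M_T}(Y))^2\bigr]
=\operatorname{Tr}[\rho_M(X)^2]-2\operatorname{Tr}[\rho_M(X)\rho_{M_T}(Y)]+\operatorname{Tr}[\rho_{M_T}(Y)^2].
\]
Next we take \(\mathbb E[\,\cdot\mid Y]\). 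Since \(\rho_{M_T}(Y)\) is \(Y\)-measurable, the tower identity \eqref{eq:coarse_posterior_state_conditional_expectation}, \(\mathbb E[\rho_M(X)\mid Y]=\rho_{M_T}(Y)\), gives
\[
\mathbb E\bigl[\operatorname{Tr}[\rho_M(X)\rho_{M_T}(Y)]\bigm| Y\bigr]=\operatorname{Tr}[\rho_{M_T}(Y)^2].
\]
Substituting, the cross term cancels against the last term up to sign, leaving
\[
\mathbb E\bigl[\operatorname{Tr}[(\rho_M(X)-\rho_{M_T}(Y))^2]\bigm| Y\bigr]
=\mathbb E\bigl[\operatorname{Tr}[\rho_M(X)^2]\bigm| Y\bigr]-\operatorname{Tr}[\rho_{M_T}(Y)^2].
\]
Averaging over \(Y\) then yields \eqref{eq:coarse_graining_purity_gap_hilbert_schmidt}.

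The only step needing care is moving the conditional expectation inside the trace against the \(Y\)-measurable operator \(\rho_{M_T}(Y)\). This is legitimate because all operators are density matrices on a finite-dimensional space, hence bounded. We may therefore apply linearity of the trace entrywise in a fixed basis, with no integrability issues.
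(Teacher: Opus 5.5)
Your proof is correct and follows essentially the same route as the paper: expand \(\operatorname{Tr}[(\rho_M(X)-\rho_{M_T}(Y))^2]\) conditionally on \(Y=T(X)\), use the tower identity \(\mathbb E[\rho_M(X)\mid Y]=\rho_{M_T}(Y)\) to turn the cross term into \(-2\operatorname{Tr}[\rho_{M_T}(Y)^2]\), then average and apply the posterior-overlap/mean-purity identity to both \(M\) and \(M_T\). The inequality then follows from nonnegativity of the resulting Hilbert--Schmidt gap, and you are right that only measurability of \(T\) and the tower property are used, not equivariance.
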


\begin{proof}
Write \(Y=T(X)\).  By
\eqref{eq:coarse_posterior_state_conditional_expectation},
\(
 \rho_{M_T}(Y)=\mathbb E[\rho_M(X)\mid Y]
\).
Expanding the square conditionally on \(Y\), and using the defining property of
conditional expectation, gives
\[
\begin{aligned}
&\mathbb E\!\left[
 \operatorname{Tr}
 \left(
   \rho_M(X)-\rho_{M_T}(Y)
 \right)^2
 \middle|Y
\right]\\
&\qquad=
\mathbb E[\operatorname{Tr}(\rho_M(X)^2)\mid Y]
-\operatorname{Tr}(\rho_{M_T}(Y)^2).
\end{aligned}
\]
Averaging and applying
\eqref{eq:posterior_overlap_equals_mean_purity} proves both statements.
\end{proof}

Consequently, replacing a measurement without a useful outcome stabilizer by
an equivariant ray-valued coarse measurement can expose a one-parameter
spectral description, but it cannot prove an upper bound on the original fine
measurement's posterior-overlap functional: the inequality points in the
opposite direction.  For a transitive ray-valued coarse measurement,
\eqref{eq:coarse_posterior_depolarizing_form} gives
\begin{equation}
    \mathcal P(M_T)
    =
    p_d(r_T),
    \qquad
    p_d(r):=r^2+\frac{(1-r)^2}{d-1}.
    \label{eq:coarse_posterior_overlap_as_fidelity_function}
\end{equation}
Moreover,
\begin{equation}
    p_d'(r)=\frac{2(dr-1)}{d-1}.
    \label{eq:coarse_purity_fidelity_monotonicity_derivative}
\end{equation}
Hence, on \(r\ge1/d\), maximizing the coarse posterior purity is equivalent to
maximizing the reported-ray fidelity \(r_T\).  This equivalence is useful for
interpreting projective-space coarse measurements, but the unrestricted global
optimization below will be proved directly from the arbitrary-effect posterior
identity rather than by coarse-graining.

\section{Global Optimization over All Measurements}  

\label{sec:global_optimization}

This section treats the collective prediction problem in which the first
\(n\) copies are measured jointly by an arbitrary POVM, while one additional
copy in the same unknown pure state is left unmeasured as the prediction target.
The proof is direct and outcome-wise.  It does not restrict the measurement
outcome space, invoke a seed representation, or reduce general density-matrix
orbits to projective-space outcomes.
The Hunt--Stein reduction in the previous section is retained for
Bayes--minimax comparisons and for the covariance arguments used later.
The Bayes optimality proof in the present global section is instead the
direct effect-wise majorization argument below.

\subsection{Symmetric-subspace notation and arbitrary POVM effects}
\label{subsec:global_symmetric_power_notation}
Let \(\mathcal H\simeq\mathbb C^d\), let
\[
    \mathcal K_n:=\operatorname{Sym}^n(\mathcal H),
    \quad
    D_n:=\dim\mathcal K_n=\binom{n+d-1}{d-1},
\]
and write \(\Pi_{\mathrm{sym}}^{(n)}\) for the orthogonal projection onto
\(\mathcal K_n\).  The observed state corresponding to a pure ray \(\phi\) is
\(\rho_\phi^{\otimes n}\), which is supported on \(\mathcal K_n\).
Therefore every POVM may, without changing its outcome probabilities, be
compressed to \(\mathcal K_n\).
Operators on \(\mathcal K_n\) are henceforth identified with their zero
extensions to \(\mathcal H^{\otimes n}\).  In particular, permutations of
the observed tensor factors act as the identity on the support of every
such compressed operator.

Let \(M_n(B):=\Pi_{\mathrm{sym}}^{(n)}M(B)\Pi_{\mathrm{sym}}^{(n)}\) be the
compressed POVM and define the finite scalar measure
\begin{equation}
    \lambda_M(B):=\operatorname{Tr}M_n(B).
    \label{eq:global_povm_scalar_dominating_measure}
\end{equation}
Because \(\mathcal K_n\) is finite dimensional, every matrix element of
\(M_n\) is absolutely continuous with respect to \(\lambda_M\).  The scalar
Radon--Nikodym derivatives assemble into a weakly measurable operator density
\(E_x\) satisfying
\begin{align}
    M_n(dx)&=E_x\,\lambda_M(dx),\nonumber\\
    &\quad\qquad E_x\ge0
    \quad \lambda_M\text{-almost everywhere}.
    \label{eq:global_povm_radon_nikodym_density}
\end{align}
Positivity follows by applying the scalar Radon--Nikodym theorem to
\(\langle v,M_n(\cdot)v\rangle\) for every \(v\in\mathcal K_n\), and
polarization identifies the common operator density.  Moreover,
\begin{equation}
    \int E_x\,\lambda_M(dx)=I_{\mathcal K_n}
    \label{eq:global_povm_density_weak_normalization}
\end{equation}
in the weak operator sense.  With the choice
\eqref{eq:global_povm_scalar_dominating_measure}, one may take
\(\operatorname{Tr}E_x=1\) almost everywhere after discarding a
\(\lambda_M\)-null set.  Thus it is enough to analyze the posterior state
generated by one arbitrary nonzero positive effect \(E\) on
\(\mathcal K_n\).

The Haar tensor identity used below is
\begin{equation}
    \int_{\mathbb P(\mathcal H)}
    \rho_\phi^{\otimes t}\,\nu(d\phi)
    =
    \frac{\Pi_{\mathrm{sym}}^{(t)}}{D_t},
    \qquad
    D_t=\binom{t+d-1}{d-1}.
    \label{eq:global_haar_tensor_identity}
\end{equation}

\subsection{Posterior predictive state associated with an arbitrary effect}
\label{subsec:global_arbitrary_effect_posterior}

\begin{lemma}[One-particle reduction of the extended symmetric projector]
\label{lem:global_symmetric_projector_partial_trace}
For every operator \(E\) supported on \(\mathcal K_n\),
\begin{equation}
\begin{split}
&\operatorname{Tr}_{1,\ldots,n}
\left[
    (E\otimes I)\Pi_{\mathrm{sym}}^{(n+1)}
\right]\\
&\qquad=
\frac{
    (\operatorname{Tr}E)I
    +n\operatorname{Tr}_{2,\ldots,n}E
}{n+1}.
\end{split}
\label{eq:global_symmetric_projector_partial_trace}
\end{equation}
Here the one-particle operator
\(\operatorname{Tr}_{2,\ldots,n}E\) is identified with an operator on the
untraced \((n+1)\)-st copy.
\end{lemma}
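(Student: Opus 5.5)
The plan is to expand the symmetric projector on \(n+1\) copies as an average over permutations and sort the terms by where the \((n+1)\)-st factor is sent. Write
\[
    \Pi_{\mathrm{sym}}^{(n+1)}=\frac{1}{(n+1)!}\sum_{\pi\in S_{n+1}}V_\pi,
\]
where \(V_\pi\) permutes tensor factors. Every \(\pi\in S_{n+1}\) factors uniquely as \(\pi=\tau\,\pi'\). Here \(\pi'\in S_n\) fixes the last slot, and \(\tau\) is either the identity or a transposition \((j,n+1)\) with \(1\le j\le n\).

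The first step uses the hypothesis that \(E\) is supported on \(\mathcal K_n\). This gives \(E=E\,\Pi_{\mathrm{sym}}^{(n)}\), and \(\Pi_{\mathrm{sym}}^{(n)}V_{\pi'}=\Pi_{\mathrm{sym}}^{(n)}\) for \(\pi'\in S_n\). After fixing a consistent ordering convention, each term satisfies \((E\otimes I)V_{\tau\pi'}=(E\otimes I)V_\tau\) or a conjugate form with the same partial trace. Summing over \(\pi'\) gives a factor \(n!\), which reduces the expression to
\[
    \frac{1}{n+1}\Bigl(\operatorname{Tr}_{1,\ldots,n}[E\otimes I]
    +\sum_{j=1}^{n}\operatorname{Tr}_{1,\ldots,n}\bigl[(E\otimes I)V_{(j,n+1)}\bigr]\Bigr).
\]
The identity term equals \((\operatorname{Tr}E)I\) immediately.

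The second step evaluates each transposition term. The standard swap identity is \(\operatorname{Tr}_A[(X_{AB}\otimes I_C)F_{BC}]=\operatorname{Tr}_A X_{AB}\), read on \(C\) instead of \(B\), where \(F\) is the swap. One can verify it on product operators \(X=Y\otimes Z\) using \(\operatorname{Tr}_B[(Z\otimes I)F]=Z\), and then extend by linearity. Applying it gives
\[
    \operatorname{Tr}_{1,\ldots,n}\bigl[(E\otimes I)V_{(j,n+1)}\bigr]
    =\operatorname{Tr}_{\{1,\ldots,n\}\setminus\{j\}}E,
\]
regarded as an operator on copy \(n+1\). Since \(E\) is permutation invariant (\(V_\sigma E=EV_\sigma=E\) for \(\sigma\in S_n\)), every one-particle marginal equals \(\operatorname{Tr}_{2,\ldots,n}E\). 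The \(n\) transposition terms therefore contribute \(n\operatorname{Tr}_{2,\ldots,n}E\), which yields the claimed formula.

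The main obstacle is bookkeeping rather than anything conceptual. One has to justify absorbing \(V_{\pi'}\) on the correct side of \(E\otimes I\) inside the partial trace. This can be handled either by cyclicity of the trace over the traced factors, since \(V_{\pi'}\) acts only on factors \(1,\ldots,n\), or by using \(V_{\pi'}E=E\) directly. The swap partial-trace identity also needs careful checking of the order of factors. Both points are resolved by testing on product operators and extending linearly.
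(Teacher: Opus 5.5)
Your proposal is correct and follows the paper's proof essentially step for step. Your factorization $\pi=\tau\pi'$ is the paper's coset decomposition $\Pi_{\mathrm{sym}}^{(n+1)}=\frac{1}{n+1}\bigl(I+\sum_j S_{j,n+1}\bigr)(\Pi_{\mathrm{sym}}^{(n)}\otimes I)$; after it, the identity term gives $(\operatorname{Tr}E)I$, each swap term gives the $j$-th one-particle marginal with no transpose, and permutation invariance of $E$ makes all marginals equal. One small slip: the swap identity should trace out $B$ as well, i.e.\ $\operatorname{Tr}_{AB}[(X_{AB}\otimes I_C)F_{BC}]=\operatorname{Tr}_A X_{AB}$ read on $C$, which is what your product-operator check and your final application actually use.
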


\begin{proof}
The symmetric projector has the coset decomposition
\begin{equation}
    \Pi_{\mathrm{sym}}^{(n+1)}
    =
    \frac{1}{n+1}
    \left(I+\sum_{j=1}^{n}S_{j,n+1}\right)
    \left(\Pi_{\mathrm{sym}}^{(n)}\otimes I\right),
    \label{eq:global_symmetric_projector_coset_decomposition}
\end{equation}
where \(S_{j,n+1}\) exchanges the \(j\)-th observed factor with the future
factor.  Since \(E\) is supported on \(\mathcal K_n\),
\[
    E\Pi_{\mathrm{sym}}^{(n)}
    =\Pi_{\mathrm{sym}}^{(n)}E=E.
\]
The identity term in \eqref{eq:global_symmetric_projector_coset_decomposition}
therefore contributes \((\operatorname{Tr}E)I\). We use the convention that
\(S_{j,n+1}\) exchanges the \(j\)-th and \((n+1)\)-st tensor factors. Letting 
\(\mathbf{i}_L = i_1 \cdots i_{j-1}\) and \(\mathbf{i}_R = i_{j+1} \cdots i_n\) 
denote the tensor indices to the left and right of the \(j\)-th position, 
for an orthonormal basis \(\{\ket a\}\), its partial-SWAP action is fixed by
\begin{align}
&\bra{a}\operatorname{Tr}_{1,\ldots,n}[S_{j,n+1}(E\otimes I)]\ket{b} \nonumber\\
&\qquad\qquad\qquad\qquad= \sum_{\mathbf{i}_L, \mathbf{i}_R} 
\langle \mathbf{i}_L, a, \mathbf{i}_R \rvert E \lvert \mathbf{i}_L, b, \mathbf{i}_R \rangle.
\label{eq:global_partial_swap_matrix_element}
\end{align}
Thus the \(j\)-th swap term is precisely the one-particle marginal of
\(E\) in the \(j\)-th tensor position, with no transpose or complex
conjugation. Moreover, under the zero-extension convention above, every permutation
\(P_\pi\) of the \(n\) observed factors satisfies
\[
    P_\pi E=E=EP_\pi,
\]
because \(P_\pi\) acts as the identity on \(\mathcal K_n\). Hence all
one-particle marginals of \(E\) coincide. Each of the \(n\) swap terms is thus
\(\operatorname{Tr}_{2,\ldots,n}E\). Dividing the sum by \(n+1\) proves
\eqref{eq:global_symmetric_projector_partial_trace}.
\end{proof}

\begin{lemma}[Posterior predictive state associated with an arbitrary effect]
\label{lem:global_arbitrary_effect_posterior}
Let \(E\ne0\) be a positive operator on \(\mathcal K_n\), and define its
normalized one-particle marginal by
\begin{equation}
    \rho_E
    :=
    \frac{\operatorname{Tr}_{2,\ldots,n}E}{\operatorname{Tr}E}.
    \label{eq:global_effect_one_particle_marginal}
\end{equation}
Then \(\rho_E\) is a density operator on \(\mathcal H\), and the posterior
predictive state conditioned on the effect \(E\) is
\begin{equation}
    \mu_E
    =
    \frac{I+n\rho_E}{n+d}.
    \label{eq:global_arbitrary_effect_posterior}
\end{equation}
In particular,
\begin{equation}
    \mu_E\ge\frac{I}{n+d}.
    \label{eq:global_arbitrary_effect_spectral_floor}
\end{equation}
\end{lemma}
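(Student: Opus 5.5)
The plan is to compute the posterior predictive state directly from Bayes' rule and the Haar tensor identity \eqref{eq:global_haar_tensor_identity}, and then reduce the resulting partial trace with Lemma~\ref{lem:global_symmetric_projector_partial_trace}. First, $\rho_E$ is a density operator. Since $E\ge0$ and $E\ne0$, we have $\operatorname{Tr}E>0$. The partial trace $\operatorname{Tr}_{2,\ldots,n}E$ is positive with trace $\operatorname{Tr}E$, so $\rho_E$ is positive with unit trace.

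For the posterior, conditioning on the effect $E$ gives the unnormalized posterior weight $\operatorname{Tr}[\rho_\phi^{\otimes n}E]\,\nu(d\phi)$. Hence
\[
    \mu_E=\frac{\int \rho_\phi\operatorname{Tr}[\rho_\phi^{\otimes n}E]\,\nu(d\phi)}{\int \operatorname{Tr}[\rho_\phi^{\otimes n}E]\,\nu(d\phi)}.
\]
For the denominator, I will use \eqref{eq:global_haar_tensor_identity} with $t=n$ and the fact that $E$ is supported on $\mathcal K_n$. This gives the value $\operatorname{Tr}E/D_n$. For the numerator, I will write $\rho_\phi\operatorname{Tr}[\rho_\phi^{\otimes n}E]=\operatorname{Tr}_{1,\ldots,n}[(E\otimes I)\rho_\phi^{\otimes(n+1)}]$. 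Exchanging the Haar integral with the partial trace, which is legitimate because everything is finite dimensional and linear, and applying \eqref{eq:global_haar_tensor_identity} with $t=n+1$ yields
\[
    \frac{1}{D_{n+1}}\operatorname{Tr}_{1,\ldots,n}\!\bigl[(E\otimes I)\Pi_{\mathrm{sym}}^{(n+1)}\bigr].
\]
Lemma~\ref{lem:global_symmetric_projector_partial_trace} turns this into $\bigl((\operatorname{Tr}E)I+n\operatorname{Tr}E\,\rho_E\bigr)/\bigl((n+1)D_{n+1}\bigr)$.

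Dividing the numerator by the denominator gives
\[
    \mu_E=\frac{D_n}{(n+1)D_{n+1}}\,(I+n\rho_E).
\]
The dimension ratio is $D_{n+1}/D_n=(n+d)/(n+1)$, so the prefactor equals $1/(n+d)$. This is \eqref{eq:global_arbitrary_effect_posterior}. As a consistency check, the trace is $(d+n)/(n+d)=1$. The spectral floor \eqref{eq:global_arbitrary_effect_spectral_floor} then follows immediately from $\rho_E\ge0$.

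The only delicate step is the ordering of tensor factors: we must check that the numerator partial trace places the future copy in the $(n+1)$-st slot, matching the convention of Lemma~\ref{lem:global_symmetric_projector_partial_trace}. We must also confirm that the marginal appearing there is $\operatorname{Tr}_{2,\ldots,n}E$ without a transpose, which that lemma already guarantees. Everything else is routine.
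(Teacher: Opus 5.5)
Your proposal is correct and follows the paper's proof essentially step for step. You compute the prior-predictive weight $\operatorname{Tr}E/D_n$ from the Haar identity at $t=n$, obtain the unnormalized posterior $D_{n+1}^{-1}\operatorname{Tr}_{1,\ldots,n}[(E\otimes I)\Pi_{\mathrm{sym}}^{(n+1)}]$ from the identity at $t=n+1$, reduce it with Lemma~\ref{lem:global_symmetric_projector_partial_trace}, and use the ratio $D_n/D_{n+1}=(n+1)/(n+d)$, exactly as the paper does; your explicit note that $\operatorname{Tr}E>0$, so that $\rho_E$ is well defined, makes precise a point the paper leaves implicit.
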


\begin{proof}
The prior-predictive probability density associated with \(E\) is
\begin{equation}
    p_E
    :=
    \int
    \operatorname{Tr}\!\left[E\rho_\phi^{\otimes n}\right]\nu(d\phi)
    =
    \frac{\operatorname{Tr}E}{D_n},
    \label{eq:global_effect_prior_predictive_probability}
\end{equation}
where we used \eqref{eq:global_haar_tensor_identity}.  The corresponding
unnormalized posterior predictive state is
\begin{equation}
\begin{split}
    A_E
    &:=
    \int
    \rho_\phi\,
    \operatorname{Tr}\!\left[E\rho_\phi^{\otimes n}\right]\nu(d\phi)\\
    &=
    \frac{1}{D_{n+1}}
    \operatorname{Tr}_{1,\ldots,n}
    \left[
        (E\otimes I)\Pi_{\mathrm{sym}}^{(n+1)}
    \right].
\end{split}
\label{eq:global_effect_unnormalized_posterior}
\end{equation}
Lemma~\ref{lem:global_symmetric_projector_partial_trace} and
\[
    \frac{D_n}{D_{n+1}}=\frac{n+1}{n+d}
\]
now give
\[
\begin{aligned}
    \mu_E
    =\frac{A_E}{p_E}
    &=
    \frac{D_n}{D_{n+1}}
    \frac{(\operatorname{Tr}E)I
    +n\operatorname{Tr}_{2,\ldots,n}E}
    {(n+1)\operatorname{Tr}E}\\
    &=
    \frac{I+n\rho_E}{n+d}.
\end{aligned}
\]
Positivity of \(E\) implies positivity of its partial trace, and
\(\operatorname{Tr}\rho_E=1\); hence \(\rho_E\) is a density operator.  The
spectral lower bound follows immediately from \(\rho_E\ge0\).
\end{proof}

The identity \eqref{eq:global_arbitrary_effect_posterior} is not merely an
evaluation of the eventual optimal covariant POVM.  It applies to every nonzero
positive effect in every collective POVM after compression to
\(\mathcal K_n\), and shows that the effect-conditioned predictive problem
for the future copy depends on the full \(n\)-copy effect only through its
normalized one-particle marginal.  Consequently, the measurement optimization
below can be reduced outcome by outcome to a spectral optimization over
one-copy density operators.

\begin{remark}[Consistency checks]
For \(E=I_{\mathcal K_n}\), one has \(\rho_E=I/d\) and hence
\(\mu_E=I/d\).  For the coherent rank-one effect
\(E=(\ket\psi\bra\psi)^{\otimes n}\), one has
\(\rho_E=\ket\psi\bra\psi\) and
\[
    \mu_E=\frac{I+n\ket\psi\bra\psi}{n+d}.
\]
For \(n=1\), this reduces to
\((I+\ket\psi\bra\psi)/(d+1)\).
\end{remark}

\begin{lemma}[Characterization of equality effects]
\label{lem:global_pure_marginal_coherent_effect}
Let \(E\ne0\) be positive and supported on \(\mathcal K_n\).  Its normalized
one-particle marginal \(\rho_E\) is pure if and only if there are a ray
\(\psi\) and a scalar \(c>0\) such that
\begin{equation}
    E=c(\ket\psi\bra\psi)^{\otimes n}.
    \label{eq:global_equality_effect_coherent_form}
\end{equation}
\end{lemma}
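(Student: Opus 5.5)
The "if" direction is immediate: for \(E=c(\ket\psi\bra\psi)^{\otimes n}\) we have \(\operatorname{Tr}E=c\) and \(\operatorname{Tr}_{2,\ldots,n}E=c\ket\psi\bra\psi\), so \(\rho_E=\ket\psi\bra\psi\) is pure; this is the consistency check already recorded in the Remark. The substance is therefore the converse, which I will prove by a support argument.

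Suppose \(\rho_E=\ket\psi\bra\psi\) and let \(P:=\ket\psi\bra\psi\). Let \(Q_1:=(I-P)\otimes I^{\otimes(n-1)}\) be the projector that tests whether the first factor is orthogonal to \(\psi\). By definition of the partial trace,
\[
    \operatorname{Tr}[E\,Q_1]=\operatorname{Tr}\bigl[(I-P)\operatorname{Tr}_{2,\ldots,n}E\bigr]=(\operatorname{Tr}E)\operatorname{Tr}[(I-P)\rho_E]=0 .
\]
Since \(E\ge0\) and \(Q_1\) is a projector, \(\operatorname{Tr}[Q_1EQ_1]=0\) forces \(Q_1E^{1/2}=0\), and hence \(EQ_1=Q_1E=0\). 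In other words, the range of \(E\) is contained in \(\mathcal K_n\cap\bigl(\ket\psi\otimes\mathcal H^{\otimes(n-1)}\bigr)\).

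The key step, and the only place where the symmetric-subspace support is used, is to show that this intersection is one-dimensional. The argument of Lemma~\ref{lem:global_symmetric_projector_partial_trace} applies here: because \(P_\pi E=E=EP_\pi\) for every permutation, all one-particle marginals of \(E\) coincide. So the same argument works for every position \(j\), with \(Q_j:=I^{\otimes(j-1)}\otimes(I-P)\otimes I^{\otimes(n-j)}\), and gives \(EQ_j=0\) for all \(j\). Hence \(E=EP^{\otimes n}\). Explicitly, \(\prod_j(I-Q_j)=P^{\otimes n}\), and each \(I-Q_j\) acts as the identity on the range of \(E\). Taking adjoints gives \(E=P^{\otimes n}EP^{\otimes n}\).

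Finally, \(P^{\otimes n}\) is the rank-one projector onto \(\ket\psi^{\otimes n}\), so \(E=cP^{\otimes n}\) with \(c=\bra{\psi^{\otimes n}}E\ket{\psi^{\otimes n}}=\operatorname{Tr}E>0\), the last inequality because \(E\neq0\). This is exactly \eqref{eq:global_equality_effect_coherent_form}. The point to watch is justifying that every marginal equals \(\rho_E\); alternatively, one could note directly that \(\mathcal K_n\cap(\ket\psi\otimes\mathcal H^{\otimes(n-1)})=\mathbb C\ket\psi^{\otimes n}\), since a symmetric vector whose first factor is \(\psi\) must, by symmetry, have \(\psi\) in every factor. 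Either route completes the proof.
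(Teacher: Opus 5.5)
Your proof is correct and is essentially the paper's argument. The paper combines your projectors into the single positive operator $A=\sum_{j=1}^n Q_j$, uses equality of all one-particle marginals on $\mathcal K_n$ to get $\operatorname{Tr}(EA)=n\operatorname{Tr}(E)\operatorname{Tr}(\rho_E Q)=0$, and concludes that the support of $E$ lies in $\ker A=\mathbb C\ket\psi^{\otimes n}$; this is exactly your conclusion $EQ_j=0$ for all $j$ combined with $\prod_j(I-Q_j)=P^{\otimes n}$, and the two arguments rest on the same ingredients (permutation invariance forcing equal marginals, and $\operatorname{Tr}(EX)=0$ with $E,X\ge0$ forcing $EX=0$).
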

\begin{proof}
The reverse implication is immediate.  Conversely, suppose
\(\rho_E=\ket\psi\bra\psi\), and put
\(Q:=I-\ket\psi\bra\psi\).  On \(\mathcal H^{\otimes n}\), define the positive
operator
\[
    A:=\sum_{j=1}^n Q_j,
\]
where \(Q_j\) acts as \(Q\) on the \(j\)-th factor.  Since all one-particle
marginals of an operator supported on \(\mathcal K_n\) coincide,
\[
    \operatorname{Tr}(EA)
    =n\operatorname{Tr}(E)\operatorname{Tr}(\rho_E Q)=0.
\]
Both \(E\) and \(A\) are positive.  Hence
\(E^{1/2}AE^{1/2}=0\), so the support of \(E\) is contained in
\(\ker A\).  Within the symmetric subspace, \(\ker A\) is the one-dimensional
space spanned by \(\ket\psi^{\otimes n}\): a symmetric tensor has zero
occupation in \(\psi^\perp\) exactly when every tensor factor lies along
\(\psi\).  Therefore \(E\) is a positive scalar multiple of the projector onto
that vector, which is \eqref{eq:global_equality_effect_coherent_form}.
\end{proof}

\subsection{Pointwise majorization and entropy lower bound}
\label{subsec:global_pointwise_majorization}

\begin{lemma}[Pointwise entropy lower bound for arbitrary effects]
\label{lem:global_pointwise_entropy_lower_bound}
Let \(E\ne0\) be any positive effect on \(\mathcal K_n\).  For any pure state
\(\ket\psi\), define
\begin{equation}
    \mu_{\mathrm{hw}}(\psi)
    :=
    \frac{I+n\ket\psi\bra\psi}{n+d}.
    \label{eq:global_highest_weight_posterior_reference}
\end{equation}
Then the spectrum of \(\mu_{\mathrm{hw}}(\psi)\) majorizes the spectrum of
\(\mu_E\), and therefore
\begin{equation}
    S(\mu_E)
    \ge
    S\!\left(\mu_{\mathrm{hw}}(\psi)\right).
    \label{eq:global_pointwise_entropy_bound}
\end{equation}
The right-hand side is independent of \(\psi\).
\end{lemma}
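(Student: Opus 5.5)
The plan is to reduce everything to the spectrum of \(\rho_E\) via Lemma~\ref{lem:global_arbitrary_effect_posterior}. That lemma gives \(\mu_E=(I+n\rho_E)/(n+d)\), where \(\rho_E\) is a density operator. Let \(\lambda_1\ge\cdots\ge\lambda_d\ge0\), with \(\sum_i\lambda_i=1\), be the eigenvalues of \(\rho_E\). Since \(\mu_E\) is an affine function of \(\rho_E\) with positive slope, its eigenvalues in decreasing order are
\[
    a_i=\frac{1+n\lambda_i}{n+d}.
\]
Likewise, \(\mu_{\mathrm{hw}}(\psi)\) has decreasing eigenvalue vector
\[
    b=\Bigl(\frac{1+n}{n+d},\frac{1}{n+d},\ldots,\frac{1}{n+d}\Bigr),
\]
which is the same map applied to the pure spectrum \((1,0,\ldots,0)\). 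This vector does not depend on \(\psi\), so the last claim of the statement follows immediately by unitary invariance of entropy.

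For majorization, I first note that the pure spectrum \(e=(1,0,\ldots,0)\) majorizes \(\lambda\). The partial sums of \(e\) equal \(1\) for every \(k\ge1\), while \(\sum_{i\le k}\lambda_i\le1\), with equality at \(k=d\). Next, the affine map \(x\mapsto (\mathbf 1+nx)/(n+d)\) preserves majorization between decreasingly ordered probability vectors. Indeed, the \(k\)-th partial sums are
\[
    \frac{k+n\sum_{i\le k}x_i}{n+d},
\]
so the inequalities between partial sums transfer with the factor \(n/(n+d)>0\), and both totals equal \(1\). Hence \(b\succ a\).

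Finally, von Neumann entropy is the Shannon entropy of the spectrum, and Shannon entropy is Schur concave. Therefore \(b\succ a\) gives \(S(\mu_{\mathrm{hw}}(\psi))=H(b)\le H(a)=S(\mu_E)\). If a direct argument is preferred, one could write \(a\) as a doubly stochastic image of \(b\) and use concavity of \(-t\log t\).

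I do not expect a real obstacle. The only point that needs care is the ordering: the decreasing order of the eigenvalues of \(\mu_E\) must match that of \(\rho_E\), which holds because the map is increasing. Some care is also needed with the normalization \(\operatorname{Tr}\rho_E=1\), which Lemma~\ref{lem:global_arbitrary_effect_posterior} supplies.
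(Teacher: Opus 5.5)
Your proposal is correct and follows essentially the same route as the paper. Both write the spectrum of \(\mu_E\) as the affine image \(\frac{1}{n+d}(1,\ldots,1)+\frac{n}{n+d}\lambda(\rho_E)\), use \(e_1\succ\lambda(\rho_E)\), transfer the majorization through this map, and conclude by Schur concavity; your explicit partial-sum check is a slightly more careful version of the paper's one-line remark that majorization survives scaling and addition of a common (constant) vector.
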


\begin{proof}
Let \(r=(r_1,\ldots,r_d)\) be the decreasingly ordered eigenvalue vector of
\(\rho_E\).  By Lemma~\ref{lem:global_arbitrary_effect_posterior},
\begin{equation}
    \lambda(\mu_E)
    =
    \frac{1}{n+d}(1,\ldots,1)
    +
    \frac{n}{n+d}r.
    \label{eq:global_arbitrary_effect_spectrum_affine}
\end{equation}
The eigenvalue vector of a pure state, \(e_1=(1,0,\ldots,0)\), majorizes every
probability vector, so \(e_1\succ r\).  Majorization is preserved under
multiplication by a nonnegative scalar and addition of a common vector.  Hence
\begin{equation}
\begin{split}
    \lambda\!\left(\mu_{\mathrm{hw}}(\psi)\right)
    &=
    \frac{1}{n+d}(1,\ldots,1)
    +
    \frac{n}{n+d}e_1\\
    &\succ
    \frac{1}{n+d}(1,\ldots,1)
    +
    \frac{n}{n+d}r
    =\lambda(\mu_E).
\end{split}
\label{eq:global_pointwise_majorization}
\end{equation}
The von Neumann entropy is Schur concave, which proves
\eqref{eq:global_pointwise_entropy_bound}.
\end{proof}

\subsection{Global optimality for posterior overlap and mean purity}
\label{subsec:global_posterior_overlap_optimality}

The arbitrary-effect identity also solves a second measurement optimization.
For a collective POVM \(M\) on the \(n\) observed copies, let \(X\) denote its
outcome and let \(\mu_M(X)\) be the posterior predictive state.  Consider
\begin{equation}
\begin{split}
    \mathcal P_n(M)
    &:={\mathbb E}_{\Phi,X}
      \operatorname{Tr}[\rho_\Phi\mu_M(X)]\\
    &={\mathbb E}_X\operatorname{Tr}[\mu_M(X)^2].
\end{split}
\label{eq:global_posterior_overlap_criterion}
\end{equation}
The second equality is
\eqref{eq:posterior_overlap_equals_mean_purity}.  Unlike the relative-entropy
risk, this quadratic criterion is to be maximized.

\begin{proposition}[Global optimality for posterior overlap]
\label{prop:global_posterior_overlap_optimality}
For every collective POVM \(M\) on the \(n\) observed copies,
\begin{equation}
    \mathcal P_n(M)
    \le
    \frac{(n+1)^2+d-1}{(n+d)^2}.
    \label{eq:global_posterior_overlap_upper_bound}
\end{equation}
The highest-weight covariant POVM
\eqref{eq:highest_weight_collective_povm} attains equality outcome by outcome.
Consequently,
\begin{equation}
    \sup_M \mathcal P_n(M)
    =
    \frac{(n+1)^2+d-1}{(n+d)^2}.
    \label{eq:global_posterior_overlap_optimal_value}
\end{equation}
\end{proposition}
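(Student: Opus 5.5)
The plan is to reduce the bound, outcome by outcome, to a purity bound for the arbitrary-effect posterior of Lemma~\ref{lem:global_arbitrary_effect_posterior}. By \eqref{eq:global_posterior_overlap_criterion}, $\mathcal P_n(M)=\mathbb E_X\operatorname{Tr}[\mu_M(X)^2]$. Write the compressed POVM through the operator density of \eqref{eq:global_povm_radon_nikodym_density}, $M_n(dx)=E_x\,\lambda_M(dx)$. Bayes' rule then identifies the posterior predictive state at $\lambda_M$-almost every outcome with $\mu_{E_x}$; the normalization of $E_x$ cancels in the ratio $A_E/p_E$. The outcome law is $p_M(dx)=(\operatorname{Tr}E_x/D_n)\,\lambda_M(dx)$, which is a probability measure by \eqref{eq:global_povm_density_weak_normalization}. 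It therefore suffices to show that
\[
    \operatorname{Tr}[\mu_E^2]\le\frac{(n+1)^2+d-1}{(n+d)^2}
\]
for every nonzero positive $E$ on $\mathcal K_n$, with equality for coherent rank-one effects, and then to integrate against $p_M$.

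For the pointwise bound I will expand $\mu_E=(I+n\rho_E)/(n+d)$ directly:
\[
    \operatorname{Tr}[\mu_E^2]=\frac{d+2n+n^2\operatorname{Tr}[\rho_E^2]}{(n+d)^2}.
\]
This uses $\operatorname{Tr}\rho_E=1$. Since $\rho_E$ is a density operator, $\operatorname{Tr}[\rho_E^2]\le1$. Hence the right-hand side is at most $(d+2n+n^2)/(n+d)^2=((n+1)^2+d-1)/(n+d)^2$. Equality holds exactly when $\rho_E$ is pure, which by Lemma~\ref{lem:global_pure_marginal_coherent_effect} means $E=c(\ket\psi\bra\psi)^{\otimes n}$. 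Alternatively, the bound follows from the majorization \eqref{eq:global_pointwise_majorization}, since purity is Schur convex. The direct computation is shorter, though.

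For attainment I take the highest-weight covariant POVM $D_n(\ket\psi\bra\psi)^{\otimes n}\nu(d\psi)$. By \eqref{eq:global_haar_tensor_identity} it resolves $I_{\mathcal K_n}$. Each of its effects is coherent, so each outcome gives $\mu_{\mathrm{hw}}(\psi)$, whose purity is $(n+1)^2/(n+d)^2+(d-1)/(n+d)^2$. This attains the bound outcome by outcome, and integrating gives equality in \eqref{eq:global_posterior_overlap_upper_bound} and hence \eqref{eq:global_posterior_overlap_optimal_value}.

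The only delicate step is the measure-theoretic identification of the posterior state with $\mu_{E_x}$ for an arbitrary, possibly continuous, POVM. This requires checking that for each Borel $B$ the joint integral $\int_B\int\rho_\phi\operatorname{Tr}[E_x\rho_\phi^{\otimes n}]\,\nu(d\phi)\,\lambda_M(dx)$ equals $\int_B A_{E_x}\,\lambda_M(dx)$. That follows from Fubini and the weak measurability of $E_x$, after which $\mu_{E_x}=A_{E_x}/p_{E_x}$ wherever $\operatorname{Tr}E_x>0$. Outcomes with $E_x=0$ form a $p_M$-null set.
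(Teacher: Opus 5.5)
Your proposal is correct and follows essentially the same route as the paper. You expand $\operatorname{Tr}(\mu_E^2)=(d+2n+n^2\operatorname{Tr}\rho_E^2)/(n+d)^2$ from the arbitrary-effect identity, bound $\operatorname{Tr}\rho_E^2\le 1$ at each outcome, and average over the prior-predictive law; the coherent effects of the highest-weight POVM then saturate the bound at every outcome. Your measure-theoretic identification of the posterior with $\mu_{E_x}$ and the equality characterization via Lemma~\ref{lem:global_pure_marginal_coherent_effect} are sound additions, not a different approach.
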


\begin{proof}
Let \(E\ne0\) be any positive outcome effect after compression to
\(\mathcal K_n\).  By
Lemma~\ref{lem:global_arbitrary_effect_posterior},
\(
 \mu_E=(I+n\rho_E)/(n+d)
\), where \(\rho_E\) is a density operator.  Therefore
\begin{equation}
\begin{split}
    \operatorname{Tr}(\mu_E^2)
    &=
    \frac{
      d+2n+n^2\operatorname{Tr}(\rho_E^2)
    }{(n+d)^2}\\
    &\le
    \frac{d+2n+n^2}{(n+d)^2}
    =
    \frac{(n+1)^2+d-1}{(n+d)^2},
\end{split}
\label{eq:global_effect_purity_upper_bound}
\end{equation}
where we used \(\operatorname{Tr}(\rho_E^2)\le1\).  The bound is independent of
the outcome, so averaging over the prior-predictive outcome distribution proves
\eqref{eq:global_posterior_overlap_upper_bound}.

For a coherent effect
\(
 E=(\ket\psi\bra\psi)^{\otimes n}
\), its normalized one-particle marginal is
\(\rho_E=\ket\psi\bra\psi\), and hence
\(\operatorname{Tr}(\rho_E^2)=1\).  Every effect density of the highest-weight
POVM is coherent, so the pointwise bound is attained at every outcome.  This
proves \eqref{eq:global_posterior_overlap_optimal_value}.
\end{proof}

\begin{remark}[Relation to estimation fidelity and coarse-graining]
\label{rem:global_overlap_fidelity_and_coarse_graining}
For the highest-weight measurement, the reported outcome is a ray
\(\widehat\psi\), and
\(
 r_{n,d}^{\mathrm{glob}}=(n+1)/(n+d)
\)
is the mean fidelity obtained by reporting that ray.  The optimized criterion
in Proposition~\ref{prop:global_posterior_overlap_optimality} is instead the
purity of the mixed posterior predictive state:
\begin{equation}
\begin{split}
    \mathcal P_n^{\mathrm{glob}}
    &=
    \left(r_{n,d}^{\mathrm{glob}}\right)^2
    +\frac{\left(1-r_{n,d}^{\mathrm{glob}}\right)^2}{d-1}\\
    &=
    \frac{(n+1)^2+d-1}{(n+d)^2}.
\end{split}
\label{eq:global_overlap_value_as_fidelity_function}
\end{equation}
Thus reported-ray fidelity and posterior overlap are distinct quantities, but
for a stabilizer-reduced ray-valued measurement the latter is the increasing
function \(p_d(r)\) of the former on \(r\ge1/d\).  For a general measurement
without such an outcome stabilizer, one may coarse-grain to obtain this
interpretation, but
Proposition~\ref{prop:purity_loss_under_coarse_graining} shows that the
coarse-grained value is no larger than the original one.  The unrestricted
global upper bound therefore comes from the direct arbitrary-effect calculation
\eqref{eq:global_effect_purity_upper_bound}, not from the coarse-graining step.
\end{remark}

\subsection{Global Bayes and minimax optimality}
\label{subsec:global_bayes_minimax_optimality}

The highest-weight covariant POVM is
\begin{equation}
    \Pi_n(d\hat\psi)
    =
    D_n
    (\ket{\hat\psi}\bra{\hat\psi})^{\otimes n}
    \nu(d\hat\psi).
    \label{eq:highest_weight_collective_povm}
\end{equation}
Its normalization follows directly from
\eqref{eq:global_haar_tensor_identity} with \(t=n\).  Coherent or
highest-weight measurements are standard in fidelity-based pure-state
estimation
\cite{MassarPopescu1995OptimalEstimation,DerkaBuzekEkert1998OptimalEstimation,Hayashi1998PureStateEstimation,Group2}.
The theorem below addresses a different decision problem: the reported object
is the mixed posterior predictive state, the loss is quantum relative entropy,
and optimality over all POVMs follows from the arbitrary-effect identity and a
pointwise entropy bound rather than from a fidelity merit theorem.

\begin{theorem}[Global highest-weight optimality]
\label{thm:highest_weight_global_optimality}
In the Haar-prior \(n\)-observed-copy pure-state prediction problem under
quantum relative-entropy loss, the minimum Bayes risk over all collective POVMs
and all density-matrix-valued decision rules is attained by the highest-weight
covariant POVM \eqref{eq:highest_weight_collective_povm}, followed by the
posterior predictive-state rule
\begin{equation}
    \mu_n^{\mathrm{glob}}(\hat\psi)
    =
    \frac{n\ket{\hat\psi}\bra{\hat\psi}+I}{n+d}.
    \label{eq:global_mu_explicit_final}
\end{equation}
\end{theorem}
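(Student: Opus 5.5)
The plan is to combine Theorem~\ref{thm:fixed_measurement_bayes_act} with the outcome-wise bound of Lemma~\ref{lem:global_pointwise_entropy_lower_bound}.

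\textbf{Step 1: reduce to posterior entropy.} Fix an arbitrary collective POVM \(M\) and an arbitrary decision rule \(\sigma\). By Theorem~\ref{thm:fixed_measurement_bayes_act},
\[
    R(M,\sigma)\ge R(M)=\int S(\mu_M(x))\,p_M(dx),
\]
so the decision rule may be replaced by the posterior mean. Randomized decision rules are covered as well: they are push-forward POVMs in the sense of Section~\ref{subsec:covariant_povm_reduction}, hence are themselves collective POVMs.

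\textbf{Step 2: express the posterior through an effect density.} Compress \(M\) to \(\mathcal K_n\) and write \(M_n(dx)=E_x\lambda_M(dx)\) as in \eqref{eq:global_povm_radon_nikodym_density}, with \(\operatorname{Tr}E_x=1\). Since \(E_x\ge0\) is nonzero \(\lambda_M\)-almost everywhere, Bayes' rule applied to the density \(\operatorname{Tr}[E_x\rho_\phi^{\otimes n}]\) gives \(\mu_M(x)=\mu_{E_x}\) for \(p_M\)-almost every \(x\). The main technical point of the whole argument sits here. One must check that the posterior defined by disintegration coincides almost surely with the ratio \(A_{E_x}/p_{E_x}\) of Lemma~\ref{lem:global_arbitrary_effect_posterior}. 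I would do this by verifying that, for every Borel \(B\) and measurable \(f\) on rays, both sides integrate to the same joint expectation
\[
    \int_B\int f(\phi)\operatorname{Tr}[E_x\rho_\phi^{\otimes n}]\,\nu(d\phi)\,\lambda_M(dx),
\]
and then invoking uniqueness of the disintegration. The map \(x\mapsto p_{E_x}=1/D_n\) is bounded, so the sets where the density vanishes form a \(p_M\)-null set.

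\textbf{Step 3: apply the pointwise bound and integrate.} Lemma~\ref{lem:global_pointwise_entropy_lower_bound} gives
\[
    S(\mu_M(x))\ge S(\mu_{\mathrm{hw}})=h_d\!\left(\tfrac{n+1}{n+d}\right)
\]
almost surely. The spectrum of \(\mu_{\mathrm{hw}}\) is \((n+1)/(n+d)\) once and \(1/(n+d)\) with multiplicity \(d-1\), and \(h_d\) is exactly the entropy of this spectrum. Integrating against the probability measure \(p_M\) gives \(R(M,\sigma)\ge h_d((n+1)/(n+d))\) for all \(M\) and \(\sigma\).

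\textbf{Step 4: attainment.} For \(\Pi_n\) in \eqref{eq:highest_weight_collective_povm}, take \(\lambda=\nu\). The effect density at \(\hat\psi\) is then proportional to \((\ket{\hat\psi}\bra{\hat\psi})^{\otimes n}\), so the Remark on consistency checks (or Lemma~\ref{lem:global_arbitrary_effect_posterior} with \(\rho_E=\ket{\hat\psi}\bra{\hat\psi}\)) gives \(\mu=\mu_n^{\mathrm{glob}}(\hat\psi)\). Its entropy equals the lower bound at every outcome, so the rule \eqref{eq:global_mu_explicit_final} attains the minimum Bayes risk.

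\textbf{Minimax.} Although not required by the statement, the minimax claim follows at once. \(\Pi_n\) is covariant and \(\mu_n^{\mathrm{glob}}\) is equivariant, so the induced decision POVM is covariant, and Theorem~\ref{thm:covariant_reduction} then yields minimaxity.
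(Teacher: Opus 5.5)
Your proposal is correct and follows the paper's proof essentially step for step. Like the paper, you reduce to posterior entropy via Theorem~\ref{thm:fixed_measurement_bayes_act}, identify each outcome's posterior with $\mu_{E_x}$ through the compressed effect density, apply the outcome-wise majorization bound of Lemma~\ref{lem:global_pointwise_entropy_lower_bound}, and note that the coherent effects of $\Pi_n$ attain it at every outcome. Your disintegration-uniqueness check that $\mu_M(x)=\mu_{E_x}$ almost surely (using $p_M=\lambda_M/D_n$) is a useful detail the paper leaves implicit; the exceptional set is null because $p_{E_x}=1/D_n$ is strictly positive, equivalently because $\operatorname{Tr}E_x=1$ holds $\lambda_M$-almost everywhere, not because that map is bounded.
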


\begin{proof}
Fix an arbitrary collective POVM \(M\).  Using
\eqref{eq:global_povm_radon_nikodym_density}, let \(E_x\) denote its compressed
positive density for \(\lambda_M\)-almost every outcome.  By
Theorem~\ref{thm:fixed_measurement_bayes_act}, the optimal announced state at
that outcome is the posterior predictive state \(\mu_{E_x}\), and the optimized
conditional loss is \(S(\mu_{E_x})\).

Fix an arbitrary reference ray \(\psi_0\).  Then
Lemma~\ref{lem:global_pointwise_entropy_lower_bound} gives, for every outcome
having nonzero prior-predictive probability,
\[
    S(\mu_{E_x})
    \ge
    S\!\left(\mu_{\mathrm{hw}}(\psi_0)\right).
\]
The lower bound is independent of both the outcome and the chosen reference ray.  Averaging it with respect to
the prior-predictive outcome distribution therefore yields
\begin{equation}
    R(M)
    \ge
    S\!\left(\mu_{\mathrm{hw}}(\psi_0)\right).
    \label{eq:global_all_povm_risk_lower_bound}
\end{equation}
This argument places no restriction on the original outcome space or on the
form of the POVM.

For the POVM \(\Pi_n\), every outcome density is proportional to the coherent
effect \((\ket{\hat\psi}\bra{\hat\psi})^{\otimes n}\).  Its normalized
one-particle marginal is \(\ket{\hat\psi}\bra{\hat\psi}\), so
Lemma~\ref{lem:global_arbitrary_effect_posterior} gives exactly
\eqref{eq:global_mu_explicit_final}.  Thus \(\Pi_n\) attains the pointwise lower
bound in \eqref{eq:global_all_povm_risk_lower_bound} at every outcome.  It
therefore attains the minimum over all POVMs and all subsequent decision rules.
\end{proof}

\begin{corollary}[Global minimax optimality]
\label{cor:global_minimax_optimality}
Let \(R_\phi(M,\sigma)\) denote the frequentist relative-entropy risk at the
true pure state \(\phi\).  Then the highest-weight procedure is minimax over all
collective POVMs and all density-matrix-valued decision rules, and
\begin{equation}
    \inf_{M,\sigma}\sup_\phi R_\phi(M,\sigma)
    =
    R_n^{\mathrm{glob}},
    \label{eq:global_minimax_value}
\end{equation}
where \(R_n^{\mathrm{glob}}\) is given in
\eqref{eq:global_optimized_entropy_risk} below.
\end{corollary}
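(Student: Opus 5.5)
The plan is to combine Theorem~\ref{thm:highest_weight_global_optimality} with the Bayes--minimax comparison of Theorem~\ref{thm:covariant_reduction}.

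\textbf{Step 1: lower bound.} Let \(B\) denote the global Haar-Bayes optimum, which by Theorem~\ref{thm:highest_weight_global_optimality} equals \(S(\mu_{\mathrm{hw}}(\psi_0))\). For any collective POVM \(M\) and any decision rule \(\sigma\), the worst-case risk dominates the Haar average:
\[
    \sup_\phi R_\phi(M,\sigma)\ge\int R_\phi(M,\sigma)\,\nu(d\phi)=R(M,\sigma)\ge B .
\]
Taking the infimum over \((M,\sigma)\) gives \(\inf_{M,\sigma}\sup_\phi R_\phi\ge B\).

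\textbf{Step 2: the highest-weight procedure is covariant.} It suffices to show that this procedure has constant frequentist risk. Consider the push-forward decision POVM \(N_\star\) of \(\Pi_n\) under the map \(\hat\psi\mapsto\mu_n^{\mathrm{glob}}(\hat\psi)\). This map is equivariant, since \(\mu_n^{\mathrm{glob}}(U\hat\psi)=U\mu_n^{\mathrm{glob}}(\hat\psi)U^\dagger\). The POVM \(\Pi_n\) is covariant, because
\[
    U^{\otimes n}(\ket{\hat\psi}\bra{\hat\psi})^{\otimes n}U^{\otimes n\dagger}=(\ket{U\hat\psi}\bra{U\hat\psi})^{\otimes n}
\]
and \(\nu\) is unitarily invariant. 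Hence \(N_\star\) is a covariant decision POVM in the sense of Section~\ref{subsec:covariant_povm_reduction}.

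\textbf{Step 3: equality.} By the last part of Theorem~\ref{thm:covariant_reduction}, the risk of \(N_\star\) is constant on \(\Theta\), so
\[
    R_{\max}(N_\star)=R_{\mu_\Theta}(N_\star)=B .
\]
This yields the chain \(B\le\inf\sup\le R_{\max}(N_\star)=B\).

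\textbf{Step 4: the explicit value.} It remains to identify \(B\) with \(R_n^{\mathrm{glob}}\), which is the value recorded in \eqref{eq:global_optimized_entropy_risk}. The spectrum of \((I+n\ket\psi\bra\psi)/(n+d)\) is \((n+1)/(n+d)\) with multiplicity one and \(1/(n+d)\) with multiplicity \(d-1\). Its entropy is therefore \(h_d(r)\) with \(r=(n+1)/(n+d)\), because \((1-r)/(d-1)=1/(n+d)\).

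\textbf{Main obstacle.} The main delicacy is Step~2. Theorem~\ref{thm:covariant_reduction} is phrased for decision POVMs on \(\mathcal S(\mathcal H)\), so the procedure must be recast in that form. The risk is unchanged under this recasting, since the loss depends on the outcome only through the announced state. One must also check that the frequentist risk is finite. This holds because \(\mu_n^{\mathrm{glob}}\ge I/(n+d)\) by \eqref{eq:global_arbitrary_effect_spectral_floor}, which bounds the loss by \(\log(n+d)\).

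Alternatively, constancy can be checked directly. The loss \(-\operatorname{Tr}\rho_\phi\log\mu_n^{\mathrm{glob}}(\hat\psi)\) depends on \(\hat\psi\) only through \(|\langle\hat\psi|\phi\rangle|^2\), and the outcome law is covariant. Together these give the same conclusion.
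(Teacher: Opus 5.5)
Your proposal is correct and follows the paper's proof. The paper uses the same lower bound \(\sup_\phi R_\phi\ge\int R_\phi\,\nu(d\phi)\ge R_n^{\mathrm{glob}}\) from Theorem~\ref{thm:highest_weight_global_optimality}, together with the joint unitary covariance of the highest-weight POVM and the posterior rule, which makes the frequentist risk constant so that the worst-case and Haar-Bayes values coincide. Your extra steps make explicit what the paper leaves implicit: the recasting as a covariant push-forward decision POVM, finiteness of the risk from the spectral floor, and the check that the entropy equals \(h_d((n+1)/(n+d))\).
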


\begin{proof}
For every procedure,
\[
    \sup_\phi R_\phi(M,\sigma)
    \ge
    \int R_\phi(M,\sigma)\,\nu(d\phi)
    \ge
    R_n^{\mathrm{glob}},
\]
where the second inequality is Theorem~\ref{thm:highest_weight_global_optimality}.
The highest-weight POVM and the posterior rule
\eqref{eq:global_mu_explicit_final} are jointly unitary covariant.  Their
frequentist risk is therefore constant on the transitive pure-state orbit, so
its worst-case value equals its Haar-Bayes value
\(R_n^{\mathrm{glob}}\).  This proves the equality.
\end{proof}

\begin{remark}[Outcome-wise optimization and the equality question] The preceding proof avoids a classification of general POVMs by reducing the optimization at each outcome to a spectral problem for a one-particle density operator. Indeed, the posterior predictive state associated with an arbitrary nonzero positive effect \(E\) depends on \(E\) only through its normalized one-particle marginal: \[ \mu_E=\frac{I+n\rho_E}{n+d}. \] Thus the effect-wise entropy minimization is reduced to minimizing \[ S\!\left(\frac{I+n\rho}{n+d}\right) \] over one-particle density operators \(\rho\). The spectrum of a pure state majorizes that of every density operator, and the von Neumann entropy is Schur concave. Consequently, a coherent effect gives a common lower bound for the optimized conditional loss at every outcome. Since the highest-weight covariant POVM consists entirely of coherent effects, it attains this lower bound outcome by outcome, and hence also after averaging over the prior-predictive outcome distribution. Lemma~\ref{lem:global_pure_marginal_coherent_effect} characterizes the equality effects.  Since the pointwise entropy gap is nonnegative, equality of the averaged risk implies equality almost everywhere with respect to the prior-predictive outcome measure.  Strict Schur concavity then forces \(\rho_E\) to be pure almost everywhere, and every positive symmetric-subspace effect with this property is a positive multiple of a coherent rank-one effect.  Thus every outcome of a globally optimal POVM having nonzero prior-predictive probability is coherent up to a prior-predictive null set. A complete classification of globally optimal POVMs would additionally require classifying all measurable coherent-effect resolutions of the symmetric-subspace identity. The present paper does not claim uniqueness of the continuous highest-weight realization. 
\end{remark}

\subsection{Depolarizing posterior for the globally optimal measurement}
\label{subsec:global_explicit_alpha_beta_tau}

The highest-weight outcome is itself a ray \(\widehat\psi\).  Its stabilizer
acts irreducibly on \(\widehat\psi^\perp\), so the general covariant-outcome
argument of Subsection~\ref{subsec:covariant_expected_decision} applies
directly.  From \eqref{eq:global_mu_explicit_final},
\begin{equation}
\begin{split}
    \rho_{\mathrm{glob}}(\widehat\psi)
    &:={\mu}_n^{\mathrm{glob}}(\widehat\psi)                                   \\
    &=
    r_{n,d}^{\mathrm{glob}}\rho_{\widehat\psi}
    +
    \frac{1-r_{n,d}^{\mathrm{glob}}}{d-1}
    (I-\rho_{\widehat\psi}),
\end{split}
\label{eq:global_posterior_depolarizing_form}
\end{equation}
where
\begin{equation}
    r_{n,d}^{\mathrm{glob}}
    =
    \frac{n+1}{n+d}.
    \label{eq:r_glob_reported_def_value}
\end{equation}
Thus the spectrum is independent of the outcome and equals
\begin{equation}
    \frac{n+1}{n+d},
    \qquad
    \frac{1}{n+d}
    \quad(d-1\text{ times}).
    \label{eq:global_mu_eigenvalues}
\end{equation}
The globally optimized relative-entropy risk is therefore
\begin{equation}
\begin{split}
    R_n^{\mathrm{glob}}
    &=h_d(r_{n,d}^{\mathrm{glob}})                                               \\
    &=-\frac{n+1}{n+d}\log\frac{n+1}{n+d}
      -(d-1)\frac{1}{n+d}\log\frac{1}{n+d}.
\end{split}
\label{eq:global_optimized_entropy_risk}
\end{equation}
The same scalar is also the posterior expected fidelity of reporting the
highest-weight outcome ray:
\begin{equation}
\begin{split}
    r_{n,d}^{\mathrm{glob}}
    &=
    \operatorname{Tr}\!\left[
       \rho_{\widehat\psi}\rho_{\mathrm{glob}}(\widehat\psi)
    \right]                                                                     \\
    &=
    \int
    \operatorname{Tr}[\rho_\phi\rho_{\widehat\psi}]\,
    \pi_{\mathrm{glob}}(d\phi\mid\widehat\psi).
\end{split}
\label{eq:global_parameter_as_posterior_fidelity}
\end{equation}
This reported-ray fidelity should not be confused with the purity of the mixed
posterior state, which is
\begin{equation}
    \operatorname{Tr}[\rho_{\mathrm{glob}}(\widehat\psi)^2]
    =
    \left(r_{n,d}^{\mathrm{glob}}\right)^2
    +
    \frac{\left(1-r_{n,d}^{\mathrm{glob}}\right)^2}{d-1}.
    \label{eq:global_mixed_posterior_purity}
\end{equation}
Equations \eqref{eq:global_posterior_depolarizing_form}--
\eqref{eq:global_optimized_entropy_risk} are the application of the general
depolarizing-posterior framework to the globally optimal measurement.  In
addition, \eqref{eq:global_mixed_posterior_purity} equals the globally maximal
posterior-overlap value in
Proposition~\ref{prop:global_posterior_overlap_optimality}.

\section{Discussion}
\label{sec:discussion}
For a fixed measurement, the posterior predictive state is the Bayes act under
quantum relative-entropy loss, and its optimized conditional loss is the
entropy of that posterior state.  The global problem is solved here by an
arbitrary-effect argument.  An effect-conditioned posterior has the form
\((I+n\rho_E)/(n+d)\), and pointwise majorization shows that coherent effects
minimize its entropy.  The highest-weight covariant POVM attains the resulting
bound outcome by outcome.  The same effect-wise spectral extremality maximizes
\(\mathbb E_{\Phi,Y}\operatorname{Tr}[\rho_\Phi\rho_M(Y)]\), which equals the mean
purity of the posterior predictive state.

The proof separates the unrestricted measurement optimization from covariance.
The Bayes optimum is identified directly for every nonzero positive effect,
without first restricting the outcome space or classifying covariant seeds.
Covariance is then used to show that the explicit highest-weight procedure has
constant frequentist risk and is therefore minimax.  The general compact-group
averaging and coarse-graining results remain useful independently of the direct
global proof: coarse-graining increases posterior entropy and decreases
posterior purity, with exact gap identities in relative entropy and squared
Hilbert--Schmidt distance, respectively.

Equality in the pointwise entropy bound requires the normalized one-particle
marginal to be pure.  Lemma~\ref{lem:global_pure_marginal_coherent_effect}
shows that a positive effect on the symmetric subspace has such a marginal if
and only if it is proportional to a coherent rank-one effect.  Hence the
remaining structural question is not the characterization of individual
equality effects, but the classification of all measurable coherent-effect
resolutions of the symmetric-subspace identity.  We do not claim uniqueness of
the continuous highest-weight realization.

The reported-ray fidelity associated with the highest-weight outcome and the
posterior overlap of the mixed predictive state are distinct quantities.  The
former is \((n+1)/(n+d)\), whereas the latter is the purity of the
identity-regularized posterior and equals
\(((n+1)^2+d-1)/(n+d)^2\).  Their relation is transparent for the ray-valued
highest-weight measurement, but the unrestricted posterior-overlap optimum is
proved directly from the arbitrary-effect calculation rather than inferred
from a coarse-graining.

Extensions to nonuniform priors, mixed-state families, multiple unmeasured
future copies, and other predictive losses would require replacements for the
Haar symmetric-tensor identity and the resulting affine one-particle
reduction.  The companion fixed product-observation problem instead keeps the
measurement fixed and uses the present exact global endpoints as benchmarks.

\section*{Acknowledgements} 
M.H. was supported in part by the General R\&D Projects of 1+1+1
CUHK--CUHK(SZ)--GDST Joint Collaboration Fund (Grant No.~GRDP2025-022) and the
Guangdong Provincial Quantum Science Strategic Initiative (Grant
No.~GDZX2505003).
A.D. was supported by the TCS Research Scholar Program (Cycle 19) from Tata Consultancy Services. N.A.W. acknowledges the funding support from the National Quantum Mission, an initiative of the Department of Science and Technology, Govt. of India and the support provided by the Foundation for QC Innovation (FQCI), DST-NQM T-Hub at IISc Bengaluru, in facilitating this project. N.A.W also acknowledges the support provided by the Grant  ANRF/ARG/2025/012066/MS from the Department of Science \& Technology, Govt of India. 
During the preparation of this manuscript, the authors used Microsoft Copilot with the GPT-5.5 and 5.6 Thinking model to assist with language editing, organization and presentation of the manuscript, and the exploration, development, and checking of certain mathematical derivations and arguments. 
All AI-assisted material was critically reviewed, verified, and revised by the authors, who take full responsibility for the accuracy and integrity of the manuscript.


\begin{thebibliography}{99}
\bibitem{SchackBrunCaves2001QuantumBayes}
Schack, Ruediger and Brun, Todd A. and Caves, Carlton M..
\newblock Quantum Bayes rule.
\newblock \emph{Physical Review A}, 64:014305, 2001.  
\newblock \href{https://doi.org/10.1103/PhysRevA.64.014305}{\texttt{doi:10.1103/PhysRevA.64.014305}}.  
\bibitem{TanakaKomaki2005BayesianPredictive}
Tanaka, Fuyuhiko and Komaki, Fumiyasu.
\newblock Bayesian predictive density operators for exchangeable quantum-statistical models.
\newblock \emph{Physical Review A}, 71:052323, 2005.  
\newblock \href{https://doi.org/10.1103/PhysRevA.71.052323}{\texttt{doi:10.1103/PhysRevA.71.052323}}.  
\bibitem{Aitchison1975PredictionFit}
Aitchison, J..
\newblock Goodness of prediction fit.
\newblock \emph{Biometrika}, 62(3):547--554, 1975.  
\newblock \href{https://doi.org/10.1093/biomet/62.3.547}{\texttt{doi:10.1093/biomet/62.3.547}}.  
\bibitem{AitchisonDunsmore1975Prediction}
Aitchison, J. and Dunsmore, I. R..
\newblock Statistical Prediction Analysis.
\newblock \emph{Cambridge University Press}, 1975.

\bibitem{Geisser1993Predictive}
Geisser, Seymour.
\newblock Predictive Inference: An Introduction.
\newblock \emph{Chapman and Hall}, 1993.  
\newblock \href{https://doi.org/10.1007/978-1-4899-4467-2}{\texttt{doi:10.1007/978-1-4899-4467-2}}.  
\bibitem{Komaki2006ShrinkagePrediction}
Komaki, Fumiyasu.
\newblock Shrinkage priors for Bayesian prediction.
\newblock \emph{Annals of Statistics}, 34(2):808--819, 2006.  
\newblock \href{https://doi.org/10.1214/009053606000000010}{\texttt{doi:10.1214/009053606000000010}}.  
\bibitem{GeorgeLiangXu2006PredictiveKL}
George, Edward I. and Liang, Feng and Xu, Xinyi.
\newblock Improved minimax predictive densities under Kullback--Leibler loss.
\newblock \emph{Annals of Statistics}, 34(1):78--91, 2006.  
\newblock \href{https://doi.org/10.1214/009053606000000155}{\texttt{doi:10.1214/009053606000000155}}.  
\bibitem{CorcueraGiummole1999Prediction}
Corcuera, Jose Manuel and Giummole, Federica.
\newblock A generalized Bayes rule for prediction.
\newblock \emph{Scandinavian Journal of Statistics}, 26(2):265--279, 1999.  
\newblock \href{https://doi.org/10.1111/1467-9469.00149}{\texttt{doi:10.1111/1467-9469.00149}}.  
\bibitem{Tanaka2006GeneralizedPredictive}
F. Tanaka.
\newblock Generalized Bayesian predictive density operators.
\newblock arXiv:quant-ph/0602072, 2006.  
\newblock \href{https://doi.org/10.48550/arXiv.quant-ph/0602072}{\texttt{doi:10.48550/arXiv.quant-ph/0602072}}.  
\bibitem{Holevo1979CovariantMeasurements}
Holevo, Alexander S..
\newblock Covariant Measurements and Uncertainty Relations.
\newblock \emph{Reports on Mathematical Physics}, 16(3):385--400, 1979.  
\newblock \href{https://doi.org/10.1016/0034-4877(79)90072-7}{\texttt{doi:10.1016/0034-4877(79)90072-7}}.  
\bibitem{Holevo1982ProbabilisticStatistical}
Holevo, Alexander S..
\newblock Probabilistic and Statistical Aspects of Quantum Theory.
\newblock \emph{North-Holland}, 1, 1982.  
\newblock \href{https://doi.org/10.1007/978-88-7642-378-9}{\texttt{doi:10.1007/978-88-7642-378-9}}.  
\bibitem{MassarPopescu1995OptimalEstimation}
Massar, Serge and Popescu, Sandu.
\newblock Optimal Extraction of Information from Finite Quantum Ensembles.
\newblock \emph{Physical Review Letters}, 74(8):1259--1263, 1995.  
\newblock \href{https://doi.org/10.1103/PhysRevLett.74.1259}{\texttt{doi:10.1103/PhysRevLett.74.1259}}.  
\bibitem{DerkaBuzekEkert1998OptimalEstimation}
Derka, Radoslav and Bu\v{z}ek, Vladim\'{i}r and Ekert, Artur K..
\newblock Universal Algorithm for Optimal Estimation of Quantum States from Finite Ensembles via Realizable Generalized Measurement.
\newblock \emph{Physical Review Letters}, 80(8):1571--1575, 1998.  
\newblock \href{https://doi.org/10.1103/PhysRevLett.80.1571}{\texttt{doi:10.1103/PhysRevLett.80.1571}}.  
\bibitem{Hayashi1998PureStateEstimation}
Masahito Hayashi.
\newblock Asymptotic estimation theory for a finite-dimensional pure state model.
\newblock \emph{Journal of Physics A: Mathematical and General}, 31(20):4633, 1998.  
\newblock \href{https://doi.org/10.1088/0305-4470/31/20/006}{\texttt{doi:10.1088/0305-4470/31/20/006}}.  
\bibitem{Group2}
M. Hayashi.
\newblock A Group Theoretic Approach to Quantum Information.
\newblock \emph{Springer Cham}, 2017.  
\newblock \href{https://doi.org/10.1007/978-3-319-45241-8}{\texttt{doi:10.1007/978-3-319-45241-8}}.  
\bibitem{umegaki1962}
Umegaki, Hisaharu.
\newblock Conditional expectation in an operator algebra. IV. Entropy and information.
\newblock \emph{Kodai Mathematical Seminar Reports}, 14(2):59--85, 1962.  
\newblock \href{https://doi.org/10.2996/kmj/1138844604}{\texttt{doi:10.2996/kmj/1138844604}}.  
\end{thebibliography}
\end{document}